\newtheorem{theorem}{Theorem}
\newtheorem{theorem*}{Theorem}
\newtheorem{corollary}[theorem]{Corollary}
\newtheorem{lemma}[theorem]{Lemma}
\newtheorem{proposition}[theorem]{Proposition}
\newtheorem*{proposition*}{Proposition}
\newtheorem{definition}[theorem]{Definition}
\newtheorem{fact}[theorem]{Fact}
\theoremstyle{definition}
\newcommand{\F}{\mathbb{F}}
\newcommand{\N}{\mathbb{N}}
\newcommand{\Q}{\mathbb{Q}}
\newcommand{\Z}{\mathbb{Z}}
\newcommand{\R}{\mathbb{R}}
\newcommand{\claimproof}[2]%
{\noindent{{\bf Proof of Claim} \ref{#1}:}
#2\hspace*{\fill}$\Box$~~~~~\vspace{5mm} }
\renewcommand{\mod}{\mbox{mod~}}
\newcommand{\vecs}{{\mathbf{s}}}
\newcommand{\vecu}{{\mathbf{u}}}
\newcommand{\vecv}{{\mathbf{v}}}
\newcommand{\vecw}{{\mathbf{w}}}
\newcommand{\vecx}{{\mathbf{x}}}
\newcommand{\veca}{{\mathbf{a}}}
\newcommand{\vecc}{{\mathbf{c}}}
\newcommand{\vecd}{{\mathbf{d}}}
\newcommand{\vece}{{\mathbf{e}}}
\newcommand{\VP}{\text{VP}}
\newcommand{\VNP}{\mathrm{VNP}}
\newcommand{\CountP}{\text{\#P}}
\renewcommand{\mod}{\text{mod }}
\newcommand{\bcover}{\mathrel{\dot{<}}}
\newcommand{\poly}{\mathsf{poly}}
\renewcommand{\F}{\mathbb{F}}
\newcommand{\dom}{\triangleright}
\newcommand{\aperm}[1]{\langle #1\rangle}
\newcommand{\acode}[1]{\mathfrak{c}(#1)}
\newcommand{\coeff}{\mathrm{Coeff}}
\newcommand{\ks}{\mathrm{KS}}
\newcommand{\bbox}{\mathcal{B}}
\newcommand{\schur}{\mathrm{s}}
\newcommand{\schub}{Y}
\newcommand{\oracle}{\mathcal{O}}
\definecolor{edcolor}{rgb}{0,0.8,0.3}
\newcommand{\Anote}[1]{}
\newcommand{\Nnote}[1]{}
\newcommand{\Ynote}[1]{}
\begin{document}

\title{Sparse multivariate polynomial interpolation in the basis of Schubert 
polynomials}

\author{Priyanka Mukhopadhyay \thanks{Center for Quantum Technologies, National 
University of Singapore ({\tt mukhopadhyay.priyanka@gmail.com}). }
\and Youming Qiao \thanks{Centre for Quantum Computation and Intelligent Systems, 
 University of Technology, Sydney, Australia
 ({\tt jimmyqiao86@gmail.com}).}
}

\maketitle

\begin{abstract}
  Schubert polynomials were discovered by A. Lascoux and M. Sch\"utzenberger in the 
study of cohomology rings of flag manifolds in 1980's. These polynomials 
generalize Schur polynomials, and form a linear basis of multivariate polynomials. 
In 2003, Lenart and Sottile introduced skew Schubert polynomials, which generalize 
skew Schur polynomials, and expand in the Schubert basis with the generalized 
Littlewood-Richardson coefficients. 

In this paper we initiate the study of these two families of polynomials from the 
perspective of computational complexity theory. 
We first observe that skew Schubert polynomials, and therefore Schubert 
polynomials, are in $\CountP$ (when evaluating on non-negative integral inputs) and 
$\VNP$.

Our main result is a deterministic algorithm that computes the expansion of 
a polynomial $f$ of degree $d$ in $\Z[x_1, \dots, x_n]$ in the basis of Schubert 
polynomials, 
assuming an oracle computing Schubert polynomials. This algorithm runs in time 
polynomial in $n$, $d$, and the bit size of the expansion. This 
generalizes, and derandomizes, the sparse interpolation algorithm of symmetric 
polynomials in the Schur basis by 
Barvinok and Fomin (Advances in Applied Mathematics, 18(3):271--285). 
In fact, our interpolation 
algorithm is general enough to accommodate any linear basis satisfying certain 
natural properties. 

Applications of the above results include a new algorithm that computes the 
generalized Littlewood-Richardson coefficients. 

\end{abstract}






\section{Introduction}\label{sec:intro}

\paragraph{Polynomial interpolation problem.} The classical polynomial 
interpolation problem starts with a set of data points, 
$(a_1, b_1)$, \dots, $(a_{n+1},$ $b_{n+1})$, where $a_i, b_i\in\Q$, $a_i\neq a_j$ 
for $i\neq j$, and asks for 
a univariate polynomial $f\in \Q[x]$ s.t. $f(a_i)=b_i$ for $i=1, \dots, n+1$. 
While 
such a polynomial of degree $\leq n$ exists and is unique, depending on different 
choices of linear bases, several formulas, under the name of Newton, Lagrange, 
and Vandermonde, have become classical. 

In theoretical computer science (TCS), the following problem also bears the name 
interpolation of polynomials, studied from 1970's: 
given a black-box access to a multivariate polynomial $f\in \F[x_1, \dots, x_n]$ 
($\F$ a field), compute $f$ by writing out its sum-of-monomials expression. 
Several algorithms have been 
proposed to solve this problem \cite{Zip79,BT88,KY88,Zip90,ks01}. A natural 
generalization is to consider expressing $f$ using 
the more powerful arithmetic circuits. In this more general setting, 
the problem is called the reconstruction problem for arithmetic circuits 
\cite[Chap. 5]{sy2010}, and the sum-of-monomials 
expression of $f$ is viewed as a subclass of arithmetic circuits, namely depth-2 
$\Sigma \Pi$ circuits. Reconstruction problems for various models 
gained quite momentum recently  
\cite{bc98,shpilka07,sv08,ks09,ams10,gkl11,kay12,gkl2012,GKQ14}.

As mentioned, for interpolation of univariate polynomials, different formulas 
depend on different choices of linear bases. On the other hand, for interpolation 
(or more precisely, reconstruction) of multivariate polynomials in the TCS 
setting, the algorithms depend on the computation models 
crucially. In the latter 
context, to our best knowledge, only the linear basis of monomials (viewed as 
depth-2 $\Sigma \Pi$ circuits) has been considered. 

\paragraph{Schubert polynomials.} In this 
paper, we consider the interpolation of multivariate polynomials in the 
TCS 
setting, but in another linear basis of multivariate polynomials, namely the 
Schubert polynomials. This provides another natural direction for 
generalizing the multivariate polynomial interpolation problem. Furthermore, as 
will be explained below, such an interpolation algorithm can be used to compute 
certain quantities in 
geometry that are of great interest, yet not well-understood. 

Schubert polynomials were discovered by Lascoux and Sch\"utzenberger \cite{LS82} 
in the study of cohomology rings of flag manifolds in 1980's. See \Cref{def:skew} 
or \Cref{def:schub} for the definition\footnote{\Cref{def:schub} is one of the 
classical definitions of 
Schubert polynomials, while \Cref{def:skew} defines Schubert polynomials in the 
context of skew Schubert polynomials. }, and 
\cite{MD_schubert,Manivel} for detailed results. For now we only 
point out that (1) Schubert polynomials in $\Z[x_1, 
\dots, x_n]$ are indexed by $\vecv=(v_1, \dots, v_n)\in \N^n$, denoted by 
$\schub_{\vecv}$;\footnote{In the literature, it is more common that Schubert 
polynomials indexed by permutations instead of $\N^n$. These two index sets are 
equivalent, through the correspondence between 
permutations and $\N^n$ as described in \Cref{sec:prel}. We adopt $\N^n$ in the introduction
because they are easier to work with when dealing with a fixed number of 
variables.} (2) $\schub_{\vecv}$ is homogeneous of degree 
$\sum_{i=1}^nv_i$.\footnote{Algorithms in this work run 
in time polynomial in the degree of the polynomial. By (2) this is equivalent to 
that the indices of Schubert polynomials are given in unary.}

Schubert polynomials have many distinguished properties. They form a linear basis 
of multivariate polynomials, and yield a generalization of Newton 
interpolation formula to the multivariate case \cite[Sec. 9.6]{Lascoux_symmetric}. 
Also, Schur polynomials are special Schubert polynomials (\Cref{fact:schub} 
(1)). 
A Schubert polynomial can contain exponentially many monomials: the 
complete homogeneous symmetric polynomials are special Schubert polynomials 
(\Cref{fact:schub} (2)).
It is not clear to us whether Schubert polynomials have polynomial-size 
arithmetic circuits. Because of these reasons, interpolation in the Schubert basis 
could not be covered by the aforementioned results for the reconstruction 
problems, unless the
arithmetic circuit complexity of Schubert polynomials is understood better: at 
present, we are only able to put Schubert polynomials in $\VNP$.

While Schubert polynomials are mostly studied due to their deep geometric 
meanings (see e.g. \cite{KM05}), they do have certain algorithmic aspects that 
have been studied shortly after their introduction in 1982. Indeed, an early paper 
on Schubert polynomials by Lascoux and Sch\"utzenberger was concerned about using 
them to compute the Littlewood-Richardson coefficients \cite{LS85}. 
That procedure has been implemented in the program system {\tt Symmetrica} 
\cite{KKL92}, which includes a set of routines to work with Schubert 
polynomials. On the other hand, the complexity-theoretic study of the algorithmic 
aspects of Schubert polynomials seems lacking, and we hope that this paper serves 
as a modest 
step towards this direction. 

\paragraph{Our results.} Our main result is about deterministic interpolation of 
sparse polynomials with integer 
coefficients in the Schubert basis, modulo an oracle that computes Schubert 
polynomials. The complexity is measured by the bit size of the representation.
\begin{theorem}\label{thm:interpolate}
Suppose we are given (1) black-box access to some polynomial $f\in\Z[x_1, \dots, 
x_n]$, $f=\sum_{\vecv\in \Gamma} a_{\vecv} \schub_{\vecv}$, $a_\vecv\neq 0\in 
\Z$ with the promise that $\deg(f)\leq d$ and $|\Gamma|\leq m$; (2) an oracle that 
computes the evaluation of Schubert polynomials on nonnegative
integral points. Then there exists a deterministic algorithm, that outputs the 
expansion of $f$ in the basis of Schubert polynomials. The algorithm runs in time 
polynomial in $n$, $d$, $m$, and $\log(\sum_{\vecv\in \Gamma}|a_{\vecv}|)$.
\end{theorem}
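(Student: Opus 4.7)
The plan is to adapt the peeling strategy of Barvinok--Fomin to the Schubert basis and derandomize it using the Schubert oracle. The algorithm repeatedly (i) identifies one index $\vecv^{*}\in\Gamma$ together with its coefficient $a_{\vecv^{*}}$, and (ii) replaces the current black box by an implicit one for $f - a_{\vecv^{*}}\schub_{\vecv^{*}}$, obtained by querying the current black box and the Schubert oracle at the same evaluation point. Since the support strictly shrinks in each iteration, at most $m$ rounds suffice.

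For step (i), I would exploit the classical fact that in a suitable monomial order $\prec$ (for example, reverse lexicographic on codes), each $\schub_{\vecv}$ has the unique $\prec$-maximum monomial $x^{\vecv}$ with coefficient $1$, and distinct indices yield distinct leading monomials. Hence the $\prec$-maximum monomial of the monomial expansion of $f$ is $x^{\vecv^{*}}$ with coefficient $a_{\vecv^{*}}$, where $\vecv^{*}=\max_{\prec}\Gamma$. To extract $(\vecv^{*},a_{\vecv^{*}})$ from the black box, I would collapse $f$ to a univariate $g(t)=f(\phi_{1}(t),\ldots,\phi_{n}(t))$ via a structured substitution that sends distinct monomials of total degree at most $d$ to distinct powers of $t$ while preserving $\prec$; then $\deg g$ encodes $\vecv^{*}$ and the leading coefficient of $g$ equals $a_{\vecv^{*}}$, both recoverable from evaluations of $g$ at a sufficiently large integer $T$ via base-$T$ digit extraction.

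The main technical obstacle is carrying out this extraction within the claimed $\poly(n,d,m,\log\sum_{\vecv}|a_{\vecv}|)$ budget. A naive Kronecker substitution $\phi_{i}(t)=t^{(d+1)^{n-i}}$ yields evaluation points of exponential bit-size; I would circumvent this either by working modulo a prime $p$ with $\log p=\poly(n,d,\log\sum_{\vecv}|a_{\vecv}|)$ and $p>2\sum_{\vecv}|a_{\vecv}|$, so that $a_{\vecv^{*}}$ is recovered as the balanced residue, or by using polynomial-bit-size Sidon-type substitutions from the deterministic sparse polynomial interpolation literature. Controlling the lower-order mass of $g$ also requires standard combinatorial upper bounds on the monomial coefficients of $\schub_{\vecv}$. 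Finally, I would phrase the argument abstractly: the only ingredients used are a monomial order under which every basis element has a distinct leading monomial of coefficient $1$, an oracle for the basis elements, and polynomial-in-input-size bounds on the basis-element coefficients. This abstraction yields the promised extension to any linear basis satisfying these natural properties.
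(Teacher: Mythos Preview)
Your high-level plan---peel off one Schubert term at a time using an oracle, and phrase everything abstractly in terms of a basis whose elements have distinct leading monomials with coefficient $1$---matches the paper's strategy. The gap is in the extraction step, and it is precisely the point the paper singles out as ``subtle.''

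You propose to use the reverse \emph{lexicographic} order. It is true that $\vecx^{\vecv}$ is the revlex-leading monomial of $\schub_{\vecv}$, but revlex is realized by essentially a single linear functional with Kronecker-type weights $c_i\sim (d+1)^{n-i}$, and you correctly note this gives exponential bit-size. Neither of your proposed fixes rescues this. Working modulo a prime $p$ only gives you residues $g(T)\bmod p$; recovering the leading term of $g$ still requires $\deg g + 1$ evaluations, and $\deg g$ is exponential. Sidon-type substitutions with polynomial weights can make distinct monomials land at distinct degrees, but they do \emph{not} preserve the revlex order, so the top-degree term of $g$ need not come from $\vecx^{\vecv^*}$; and you cannot fall back on recovering all monomials of $f$, since a single Schubert polynomial can already have exponentially many.

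What the paper does instead is replace revlex by the stronger \emph{reverse dominance} order: every monomial $\vecx^{\vecu}$ in $\schub_{\vecv}$ satisfies $\sum_{j\ge i}u_j\le\sum_{j\ge i}v_j$ for all $i$. The payoff is that for the lower-triangular all-ones matrix $A$, \emph{every} linear form $\langle A\vecc',\cdot\rangle$ with $\vecc'\in(\Z^+)^n$ attains its unique maximum on $E_{\schub_{\vecv}}$ at $\vecv$. Hence one does not need a single order-preserving substitution; one only needs a $\vecc'$ with polynomial-size entries that separates the $m$ leading exponents $\{\vecv:\vecv\in\Gamma\}$ from one another, and the Klivans--Spielman set supplies a polynomial-size family of such $\vecc'$ of which a $2/3$ fraction work. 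This is exactly why the paper remarks that the revlex property ``turns out to \emph{not} suffice for the purpose of interpolation,'' and why reverse dominance (their \Cref{lem:dom}) is the key lemma you are missing.
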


In fact, \Cref{thm:interpolate} relies on the algorithm in 
\Cref{thm:interpolate_good} which applies to a more general setting: it only 
requires the linear basis where the leading monomials are ``easy to isolate.'' See 
\emph{Our techniques} for a more detailed discussion. 

\Cref{thm:interpolate} generalizes and derandomizes a result by Barvinok 
and Fomin, who in \cite{BF97} present a randomized algorithm 
that interpolates sparse symmetric polynomials in the Schur basis. As mentioned, 
Schur polynomials are special Schubert polynomials, and the Jacobi-Trudi formulas 
for Schur polynomials yield efficient 
algorithms to compute them. So to recover Barvinok and Fomin's result, 
apply our interpolation algorithm to symmetric 
polynomials, and replace the $\CountP$ oracle computing Schubert polynomials by 
the efficient algorithm computing Schur polynomials. Likewise, for those Schubert 
polynomials with efficient evaluation procedures\footnote{For example, there are determinantal formulas \cite[Sec. 2.6]{Manivel} for 
Schubert polynomials indexed by 2143-avoiding permutations ($\nexists 
i<j<k<\ell$ s.t. $\sigma(j)<\sigma(i)<\sigma(\ell)<\sigma(k)$), and 321-avoiding 
permutations ($\nexists i<j<k$ s.t. $\sigma(k)<\sigma(j)<\sigma(i)$). 
2143-avoiding permutations are also known as vexillary permutations and form a 
generalization of Grassmannian permutations.}, we 
can get rid of the $\CountP$ oracle to obtain a polynomial-time interpolation 
algorithm. 

Our second result concerns the evaluation of Schubert polynomials. In fact, we 
shall work with a generalization of Schubert polynomials, namely \emph{skew 
Schubert polynomials} as defined by Lenart and Sottile \cite{LS03}. We will 
describe the definition in \Cref{sec:prel}. For now, we only remark that skew 
Schubert polynomials generalize Schubert polynomials in a way analogous to how 
skew Schur polynomials generalize Schur polynomials. A skew Schubert polynomial, 
denoted by
$\schub_{\vecw/\vecv}$, is 
indexed by $\vecv\leq \vecw\in \N^m$ where $\leq$ denotes the Bruhat 
order\footnote{Bruhat order on codes is inherited from the Bruhat order on 
permutations through the correspondence between codes and permutations as in \Cref{sec:prel}.}. 
Schubert polynomials 
can be defined by setting $\vecw$ to correspond to the permutation maximal in the 
Bruhat order.

\begin{theorem}\label{thm:eval}
Given $\vecv, \vecw\in\N^m$ in unary, and $\veca\in\N^n$ in binary, 
computing $\schub_{\vecw/\vecv}(\veca)$ is in $\CountP$.
\end{theorem}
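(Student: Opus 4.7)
The strategy is to exhibit $\schub_{\vecw/\vecv}(\veca)$ as the cardinality of a polynomial-time recognizable set of witness pairs $(T,Y)$, thereby placing the evaluation in $\CountP$. Conceptually this mirrors the standard route to $\CountP$-membership for monomial-positive combinatorially defined polynomials: convert the positive monomial expansion into a count and then unfold each integer evaluation $a_i^{k}$ into its own layer of $k$-tuples in $[a_i]$.

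First, I would appeal to the combinatorial description of skew Schubert polynomials to be given in \Cref{def:skew} (following Lenart--Sottile \cite{LS03}), expressing
\[
\schub_{\vecw/\vecv}(\vecx) \;=\; \sum_{T \in \mathcal{T}(\vecv,\vecw)} \vecx^{\,\mathrm{wt}(T)},
\]
where $\mathcal{T}(\vecv,\vecw)$ is a finite family of combinatorial objects (compatible sequences, or chains in the Bruhat order carrying reduced-word and integer labels) each of total degree $d := |\vecw|-|\vecv|$. Because $\vecv,\vecw$ are given in unary, this degree $d$ is polynomial in the input size, so each $T \in \mathcal{T}(\vecv,\vecw)$ has an encoding of length $\poly(m,d)$ whose defining conditions (a bounded-length reduced word, a compatibility inequality, Bruhat comparisons between permutations of $[m]$) can be checked in deterministic polynomial time.

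Next, I would rewrite the evaluation as a count. Since $\veca \in \N^n$, every term on the right-hand side of
\[
\schub_{\vecw/\vecv}(\veca) \;=\; \sum_{T \in \mathcal{T}(\vecv,\vecw)} \prod_{i=1}^{n} a_i^{\mathrm{wt}(T)_i}
\]
is a nonnegative integer, and $\prod_i a_i^{\mathrm{wt}(T)_i}$ is exactly the number of labelings $Y = (y_{i,j})$ with $1 \le y_{i,j} \le a_i$ for $i \in [n]$ and $j \in [\mathrm{wt}(T)_i]$. Hence
\[
\schub_{\vecw/\vecv}(\veca) \;=\; \bigl|\{(T,Y) : T \in \mathcal{T}(\vecv,\vecw),\ Y \text{ a valid labeling of } T\}\bigr|.
\]
A witness $(T,Y)$ has total bit-length polynomial in the input: $T$ has size $\poly(m,d)$, while $Y$ has at most $d$ entries, each a binary integer bounded by $\max_i a_i$. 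Validity is verifiable in deterministic polynomial time by checking the combinatorial conditions for $T$ and then binary-comparing each $y_{i,j}$ with $a_i$, which exhibits $\schub_{\vecw/\vecv}(\veca)$ as a $\CountP$ function.

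The main obstacle is the first step: confirming that the combinatorial formula for skew Schubert polynomials chosen from \Cref{def:skew} is both manifestly monomial-positive and polynomial-time verifiable. For ordinary Schubert polynomials, the Billey--Jockusch--Stanley formula (equivalently, pipe dreams / RC-graphs) enjoys both properties, and the Lenart--Sottile skew extension retains them because the defining objects remain sequences or chains of length at most $d$ with locally checkable conditions. Once this formula is in hand, the reduction to counting witness pairs is routine.
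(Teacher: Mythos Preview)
Your proposal is correct and follows essentially the same approach as the paper: both exploit \Cref{def:skew}, which expresses $\schub_{\vecw/\vecv}$ as a monomial-positive sum over increasing chains in the labeled Bruhat order, check chain validity in polynomial time, and aggregate the monomial values. The only cosmetic difference is that the paper has the Turing machine output the integer $\veca^{\vecd-\vece(C)}$ directly and invokes the integer-output characterization of $\CountP$ from \Cref{sec:prel}, whereas you unfold each such integer into an auxiliary witness $Y$ to reduce to pure $0/1$ counting.
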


\begin{corollary}\label{cor:eval}
Given $\vecv\in\N^n$ in unary, and $\veca\in\N^n$ in binary, 
computing $\schub_{\vecv}(\veca)$ is in $\CountP$.
\end{corollary}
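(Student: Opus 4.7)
The plan is to derive \Cref{cor:eval} as an immediate specialization of \Cref{thm:eval} via the identification, noted after the statement of that theorem, that the ordinary Schubert polynomial $\schub_\vecv$ coincides with the skew Schubert polynomial $\schub_{\vecw_0/\vecv}$ whenever $\vecw_0$ is the code of the longest permutation on sufficiently many letters. The strategy is therefore to pick such a $\vecw_0$ whose encoding remains polynomial in the size of the input, to pad $\veca$ accordingly, and then to feed the resulting instance into the $\CountP$ machine guaranteed by \Cref{thm:eval}.

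Concretely, given $\vecv\in \N^n$ in unary and $\veca \in \N^n$ in binary, first I would set $m := \max_i(i + v_i)$, which is bounded by $n + \sum_i v_i$ and is thus polynomial in the bit length of the input. This choice makes $\vecv$, padded with zeros to length $m$, a valid code of a permutation in $S_m$. I would then set $\vecw_0 := (m-1, m-2, \dots, 1, 0)\in \N^m$, the code of the reverse permutation in $S_m$; its unary encoding has size $O(m^2)$. Because $\vecw_0$ is maximal in the Bruhat order on $S_m$, the padded $\vecv$ automatically lies below $\vecw_0$, so $\schub_{\vecw_0/\vecv}$ is well-defined. Finally I would pad $\veca$ with zeros to a vector $\veca' \in \N^m$, using that Schubert polynomials on $n$ variables take the same value before and after adjoining zero coordinates.

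Invoking \Cref{thm:eval} on the triple $(\vecv,\vecw_0,\veca')$ then yields a $\CountP$ computation of $\schub_{\vecw_0/\vecv}(\veca')$, which equals $\schub_\vecv(\veca)$ by the identification above. Composing the polynomial-time padding step with this $\CountP$ oracle leaves the problem in $\CountP$, as the class is closed under polynomial-time many-one reductions. The only substantive point that needs to be checked is precisely the identification of the ordinary Schubert polynomial with the skew Schubert polynomial at the maximal upper index; this, together with the invariance of Schubert evaluations under adjoining zero variables, should follow directly from the definitions recalled in \Cref{sec:prel}, so no real obstacle remains beyond the content of \Cref{thm:eval} itself.
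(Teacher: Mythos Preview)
Your proposal is correct and follows essentially the same route as the paper: compute $m=\max_i(i+v_i)$ (the paper's $N$), form the code $\vecw_0=(m-1,\dots,1,0)$ of the longest permutation, and invoke \Cref{thm:eval}. The paper's proof is a terse three-line version of exactly this reduction; your extra care about padding $\veca$ is unnecessary since \Cref{thm:eval} already allows $\veca\in\N^n$ with $n\neq m$, but it is harmless.
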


Note that in \Cref{thm:eval} we have $\vecv, \vecw\in \N^m$, while in 
\Cref{cor:eval} we have $\vecv\in \N^n$. This is because, while for 
Schubert polynomials we know $\schub_\vecv$ for $\vecv\in\N^n$ depends on $n$ 
variables, such a relation is not clear for skew Schubert polynomials. 

Finally, we also study these polynomials in the framework of algebraic complexity. 

\begin{theorem}\label{thm:vnp}
Skew Schubert polynomials, and therefore Schubert polynomials, are in $\VNP$.
\end{theorem}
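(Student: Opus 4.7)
The plan is to verify Valiant's criterion for $\VNP$ membership: a family of polynomials of polynomially bounded degree lies in $\VNP$ provided that, for each monomial $x^\vece$, the coefficient of $x^\vece$ is computable by a $\CountP$ function of the index and $\vece$. Since $\schub_\vecv$ is homogeneous of degree $\sum_i v_i$ and $\schub_{\vecw/\vecv}$ of degree $\sum_i (w_i - v_i)$, and the indices are presented in unary, both the degree and the number of variables are polynomially bounded in the input size, so only the coefficient-counting condition requires work.

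For ordinary Schubert polynomials I would invoke the pipe-dream (RC-graph) monomial expansion of Billey-Jockusch-Stanley and Fomin-Kirillov, which writes $\schub_\vecv = \sum_{D} \prod_{(i,j)\in D} x_i$, where $D$ ranges over reduced pipe dreams whose associated permutation matches $\vecv$. A pipe dream is a subset of boxes in an $n\times n$ grid and therefore has a polynomial-size description, and the predicate ``$D$ is a reduced pipe dream for $\vecv$ whose content equals the monomial $x^\vece$'' can be verified in polynomial time by simulating the pipes. Hence the coefficient of $x^\vece$ is the cardinality of a polynomial-time decidable set, which lies in $\CountP$, and Valiant's criterion then places $\schub_\vecv$ in $\VNP$.

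For skew Schubert polynomials I would appeal to the combinatorial formula of Lenart and Sottile~\cite{LS03}, which expresses $\schub_{\vecw/\vecv}$ as a positive sum over certain saturated $\vecv$-to-$\vecw$ chains in the Bruhat order carrying reflection labels, each chain contributing a monomial read off from its labels. Because Bruhat comparability, reflection admissibility, and the monomial extraction are polynomial-time computable, and each chain has length $|\vecw|-|\vecv|$ and therefore polynomial bit size (the input being unary), the coefficient of any fixed $x^\vece$ is again the cardinality of a polynomial-time decidable set, giving a $\CountP$ function. Valiant's criterion then yields both statements simultaneously, with ordinary Schubert polynomials arising as the special case where $\vecw$ corresponds to the maximum element of the Bruhat order.

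The main obstacle is selecting and verifying a combinatorial model for $\schub_{\vecw/\vecv}$ whose defining predicate is manifestly polynomial-time checkable: the Lenart-Sottile labeled-chain description is best suited, but one must justify that the admissibility conditions along the chain are local enough to be tested efficiently and that the induced coefficient-count has no sign cancellation issues (which the positivity of their expansion guarantees). Once that bookkeeping is done, the $\VNP$ membership of both families follows uniformly from Valiant's criterion.
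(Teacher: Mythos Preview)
Your proposal is essentially the paper's own argument: both invoke Valiant's criterion by showing that the coefficient of each monomial is a $\CountP$ function, and for skew Schubert polynomials both use the Lenart--Sottile increasing-chain expansion (the paper's Definition~\ref{def:skew}) as the polynomial-size witness whose validity is checkable in polynomial time. Your separate pipe-dream argument for ordinary $\schub_\vecv$ is correct but redundant, since the paper (and your last paragraph) already obtains this as the special case $\vecw=\pi_0$.

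One technical point you pass over but the paper explicitly handles: Valiant's criterion, as stated in the standard references, is formulated for \emph{multilinear} polynomials, and (skew) Schubert polynomials are not multilinear. The paper fixes this by introducing $N$ copies of each variable $x_i$ (where $N$ upper-bounds every individual degree), encoding an exponent $a_i\le N$ in unary across those copies, and using the gadget $a_{i,j}x_i+1-a_{i,j}$ so that the resulting auxiliary polynomial is multilinear in the copy variables; because the total degree is polynomially bounded (the indices being given in unary), this blow-up is polynomial and Valiant's criterion applies. You should insert this step, or cite a version of the criterion that already covers the non-multilinear case, to make the argument complete.
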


\paragraph{Applications of our algorithms.} 
A long-standing open problem about Schubert polynomials is to give a 
combinatorially positive 
rule, or in other words, a $\CountP$ algorithm for the \emph{generalized 
Littlewood-Richardson (LR) coefficients}, defined as the 
coefficients of the 
expansion of products of two Schubert polynomials in the Schubert basis. They are 
also the coefficients of the expansion of skew Schubert polynomials in the 
Schubert basis \cite{LS03}. These numbers are of great interest in algebraic 
geometry, since they are the intersection numbers of Schubert varieties in the 
flag manifold. See \cite{ABS14,MPP14} for recent developments on this.

The original LR coefficients are a special case when replacing Schubert with Schur 
in the above definition. It is known that the original LR coefficients are 
$\CountP$-complete, by the celebrated LR rule, and a result of 
\cite{Nara06}. Therefore the generalized LR coefficients are also $\CountP$-hard 
to 
compute, while as mentioned, putting this problem in $\CountP$ is considered to be 
very difficult -- in fact, we were not aware of any non-trivial complexity-theoretic upper 
bound. On the other hand, by interpolating skew Schubert polynomials in the 
Schubert basis, we have the following.

\begin{corollary}\label{cor:psharpp}
Given $\vecw, \vecv\in \N^m$, let $\Gamma=\{\vecu\in \N^m \mid a^{\vecv, 
\vecu}_{\vecw}\neq 0\}$. Then there exists a deterministic algorithm that, given 
access to an $\CountP$ oracle, 
computes $(a^{\vecv, \vecu}_{\vecw} | \vecu\in \N^m)$ in time polynomial in 
$|\vecw|, |\Gamma|$, and $\log(\sum_{\vecu\in \Gamma} a^{\vecv, 
\vecu}_{\vecw})$. 
\end{corollary}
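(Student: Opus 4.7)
The strategy is to read the statement as an immediate consequence of interpolating a skew Schubert polynomial in the Schubert basis. By the theorem of Lenart and Sottile \cite{LS03} already cited in the excerpt,
\[
\schub_{\vecw/\vecv} \;=\; \sum_{\vecu} a^{\vecv,\vecu}_{\vecw}\,\schub_{\vecu},
\]
so running a sparse interpolation algorithm on $\schub_{\vecw/\vecv}$ in the Schubert basis returns exactly the family $(a^{\vecv,\vecu}_{\vecw})_{\vecu \in \Gamma}$, understood as zero outside $\Gamma$. Moreover, because the $a^{\vecv,\vecu}_{\vecw}$ are nonnegative integers, the bit-size parameter $\log\bigl(\sum_{\vecu\in\Gamma}|a^{\vecv,\vecu}_{\vecw}|\bigr)$ appearing in \Cref{thm:interpolate} coincides with the quantity $\log\bigl(\sum_{\vecu\in\Gamma} a^{\vecv,\vecu}_{\vecw}\bigr)$ named in the corollary.

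The two oracle requirements of \Cref{thm:interpolate} are both supplied by the single $\CountP$ oracle in the hypothesis. A black-box evaluator for $\schub_{\vecw/\vecv}$ at nonnegative integer points follows from \Cref{thm:eval}; an oracle for Schubert polynomial evaluations at nonnegative integer points follows from \Cref{cor:eval}. With these two ingredients in place, the plan is to invoke the interpolation algorithm of \Cref{thm:interpolate} on the black-box $\schub_{\vecw/\vecv}$ and read off the nonzero coefficients.

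The final step is to confirm that the parameters fit the complexity bound. The polynomial $\schub_{\vecw/\vecv}$ is homogeneous of degree $|\vecw|-|\vecv|\le|\vecw|$, so the degree bound $d$ in \Cref{thm:interpolate} is at most $|\vecw|$. The ambient number of variables $n$ can be fixed polynomially in $|\vecw|$, since the homogeneity bound on $|\vecu|$ prevents indices of nonzero terms from having more than $|\vecw|$ nontrivial entries. The sparsity is $|\Gamma|$ by definition. Substituting these into the runtime of \Cref{thm:interpolate} yields a polynomial in $|\vecw|$, $|\Gamma|$, and $\log\bigl(\sum_{\vecu\in\Gamma} a^{\vecv,\vecu}_{\vecw}\bigr)$, matching the claim. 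I do not foresee a genuine obstacle: the corollary is a clean package of \Cref{thm:eval}, \Cref{cor:eval} and \Cref{thm:interpolate} via the Lenart--Sottile identity, and the only care required is the bookkeeping on the ambient number of variables and the translation between skew Schubert evaluation and Schubert basis interpolation.
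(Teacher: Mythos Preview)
Your proposal is correct and follows exactly the approach the paper intends: the paper's own justification for \Cref{cor:psharpp} is the single sentence ``by interpolating skew Schubert polynomials in the Schubert basis, we have the following,'' and you have unpacked precisely that, supplying the black box via \Cref{thm:eval}, the Schubert-evaluation oracle via \Cref{cor:eval}, and then invoking \Cref{thm:interpolate}. One minor bookkeeping caveat: from \Cref{def:skew} the degree of $\schub_{\vecw/\vecv}$ is $|\vecd|-(|\vecw|-|\vecv|)=\binom{N}{2}-|\vecw|+|\vecv|$ rather than $|\vecw|-|\vecv|$, and the number of variables is $N-1$ with $N\le |\vecw|+m$, so both parameters are polynomial in $|\vecw|$ and $m$ (the ambient code length, which is part of the input) rather than in $|\vecw|$ alone; this does not affect the argument but is worth stating accurately.
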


The algorithm in \Cref{cor:psharpp} has the benefit of running in time polynomial 
in the 
\emph{bit size} of $a^{\vecv, \vecu}_{\vecw}$. 
Therefore, when $|\Gamma|$ is small compared to $\sum_\vecw a^{\vecv,\vecu}_\vecw$, 
our algorithm is expected to lead to a notable saving, compared to those algorithms that are solely based on positive rules, 
e.g. \cite{Kogan01}. (\cite{Kogan01} furthermore only deals with the case of 
Schubert times Schur.) 
Of course, in practice we need to take into account the time for evaluating Schubert polynomials. 

In addition, we note that Barvinok and Fomin's original motivation is to compute e.g. the Littlewood-Richardson coefficients, 
Kostka numbers, and the irreducible characters of the symmetric group. 
See \cite[Sec. 1]{BF97} for the definitions and importance of these numbers. 
Since our algorithm can recover theirs (without referring to a $\CountP$ oracle), it can be used to compute these quantities as well. 
Note that our algorithm is moreover deterministic. 

Our original motivation of this work was to better understand this approach of Barvinok and Fomin to compute the 
LR coefficients. This topic recently receives attention in complexity theory \cite{Nara06,MNS12,BI13_positivity}, due to 
its connection to the geometric complexity theory (GCT) \cite{GCT_JACM,MNS12}. 
Though this direction of generalization does not apply to GCT directly, we believe it helps in a better understanding 
(e.g. a derandomization) of this approach of computing the LR coefficients.

\paragraph{Our techniques.} We achieve \Cref{thm:interpolate} by first formalizing some natural properties of a 
linear basis of (subrings of) multivariate polynomials (\Cref{subsec:good}).
These are helpful for the interpolation purpose. 
If a basis satisfies these properties, we call this basis \emph{interpolation-friendly}, or I-friendly for short. 
Then we present a deterministic interpolation algorithm for I-friendly bases (\Cref{thm:interpolate_good}). 
We then prove that the Schubert basis is interpolation-friendly\footnote{While the proofs 
for these properties of Schubert polynomials are easy, and should be 
known by experts, we include complete proofs as we could not find 
complete proofs or explicit statements. 
Most properties of Schubert polynomials, e.g. the 
$\CountP$ result, \Cref{lem:dom}, \Cref{prop:schub_bound}, can also 
be obtained by a combinatorial tool called RC graphs \cite{BB93}. We 
do not attempt to get optimal results (e.g. in \Cref{cor:long_trans} and 
\Cref{prop:schub_bound}), but are content with bounds that are good 
enough for our purpose. } (\Cref{sec:schub_good}).

Technically, for the interpolation algorithm, we combine the structure 
of the Barvinok-Fomin algorithm with several ingredients from the 
elegant deterministic interpolation algorithm for sparse polynomials in the 
monomial basis by Klivans and Spielman \cite{ks01}. We deduce the key property for 
Schubert polynomials to be I-friendly, via the transition formula of Lascoux 
and Sch\"utzenberger \cite{LS85}. The concept of I-friendly bases and 
the corresponding interpolation algorithm may be of independent interest, since 
they may be used to apply to other bases of (subrings of) the multivariate 
polynomial 
ring, e.g. Grothendieck polynomials and Macdonald polynomials 
\cite{Lascoux_polynomial}. 

We would like to emphasize a subtle point of Schubert polynomials 
that is cruicial for our algorithm: for $\schub_{\vecv}$, if the monomial 
$\vecx^\vecu$ is in $\schub_{\vecv}$, then $\vecv$ dominates $\vecu$ reversely; 
that 
is, $v_n\geq u_n$, $v_n+v_{n-1}\geq u_n+u_{n-1}$, \dots, $v_n+\dots+v_1\geq 
u_n+\dots +u_1$. While by no means a difficult property and known to 
experts, it is interesting 
that the only reference we can find is a footnote in Lascoux's book \cite[pp. 62, 
footnote 4]{Lascoux_polynomial}, so we prove it in \Cref{lem:dom}. On 
the other hand, in the literature a weaker property is often mentioned, that is 
$\vecv$ is 
no less than $\vecu$ in the reverse lexicographic order.
However, this order turns out to \emph{not} suffice for the purpose of interpolation. 

\paragraph{Comparison with the Barvinok-Fomin algorithm.}

The underlying 
structures of the Barvinok-Fomin algorithm and ours are quite similar. There are 
certain major differences though. 

From the mathematical side, note that Barvinok 
and Fomin used the dominance order of monomials, which corresponds to the use of 
upper triangular matrix in \Cref{subsec:good}. On the other hand, we make 
use of the reverse dominance order, which corresponds to the use of lower 
traingular matrix in \Cref{prop:schub_friendly}. 
It is not hard to see that the dominance order could not work for all Schubert polynomials. 
We also need to upgrade several points (e.g. the computation, and the bounds on coefficients) from Schur polynomials to Schubert polynomials.

From the algorithmic side, both our algorithm and the Barvinok-Fomin algorithm 
reduce multivariate interpolation to univariate interpolation. 
Here are two key differences. Firstly, Barvinok and Fomin relied on randomness to 
obtain a set of linear forms 
s.t. most of them achieve distinct values for a small set of vectors. We resort to 
a deterministic construction of Klivans and Spielman for this set, therefore 
derandomizing the Barvinok-Fomin algorithm. Secondly, our 
algorithm has a recursive structure as the Barvinok-Fomin algorithm. But in each 
recursive step, the approaches are different; ours is based on the method of the 
Klivans-Spielman algorithm.
As a consequence, our
algorithm does not need to know the bounds on the coefficients in the 
expansion, while the basic algorithm in \cite[Sec. 4.1]{BF97} does. Barvinok 
and Fomin avoided the dependence on this bound via binary search and probabilistic 
verification in \cite[Sec. 4.2]{BF97}. However, it seems difficult to derandomize 
this probabilistic verification procedure. 

\paragraph{Organization.} In \Cref{sec:prel} we present certain 
preliminaries. In \Cref{sec:sharp_p} 
we define skew Schubert polynomials and Schubert polynomials, and present the 
proof for \Cref{thm:eval}, \Cref{cor:eval} and \Cref{thm:vnp}. 
In \Cref{sec:interpolate} we define interpolation-friendly bases, and present the 
interpolation algorithm in such bases. 
In \Cref{sec:schub_good} we prove that Schubert polynomials form an 
I-friendly basis, therefore proving \Cref{thm:interpolate}. We remind the reader 
that skew Schubert polynomials are only studied in \Cref{sec:sharp_p}. 


\section{Preliminaries}\label{sec:prel}

\paragraph{Notations.} For $n\in \N$, $[n]:=\{1, \dots, n\}$.
Let $\vecx=(x_1, \dots, x_n)$ be a tuple of $n$ variables. When no ambiguity, 
$\vecx$ may represent the set $\{x_1, \dots, x_n\}$. 
For $\vece=(e_1, \dots, e_n)\in \N^n$, the monomial with exponent $\vece$ 
is $\vecx^\vece:=x_1^{e_1}\dots x_n^{e_n}$.
Given $f\in \Z[\vecx]$ and $\vece\in\N^n$, $\coeff(\vece, f)$ denotes 
the coefficient of $\vecx^\vece$ in $f$. $\vecx^\vece$ (or $\vece$) is in $f$ if 
$\coeff(\vece, f)\neq 0$, and $E_f:=\{\vece\in \N^n\mid 
\vecx^\vece\in f\}$. 
Given two vectors $\vecc=(c_1, \dots, c_n)$ and $\vece=(e_1, \dots, e_n)$ in 
$\Q^n$, their inner product is $\langle \vecc, \vece\rangle=\sum_{i=1}^n c_i e_i$. 
Each $\vecc\in \Q^n$ defines a linear form $\vecc^\star$, which maps 
$\vece\in\Q^n$ to $\langle \vecc, \vece\rangle$.

\paragraph{Codes and permutations.} We call $\vecv=(v_1, 
\dots, v_n)\in \N^n$ a \emph{code}. We identify $\vecv$ with $\vecv'=(v_1, \dots, 
v_n, 0, \dots, 0)\in \N^m$, for $m\geq n$. The weight of the code $\vecv$, denoted 
by $|\vecv|$, is $\sum_{i=1}^n v_i$. A code $\vecv\in\N^n$ is \emph{dominant} if 
$v_1\geq v_2\geq \dots\geq v_n$. We define a \emph{partition} to be a dominant 
code, and often represent it using $\alpha$.
$\vecv$ is \emph{anti-dominant} if $v_1\leq v_2\leq \dots \leq v_k$ and 
$v_{k+1}=\dots=v_n=0$. 

For $N\in \N$, $S_N$ is the symmetric group on $[N]$. A permutation 
$\sigma\in 
S_N$ is written as
$\underline{\sigma(1),\sigma(2), \dots, \sigma(N)}$. We identify $\sigma$ with 
$\sigma'=\underline{\sigma(1), \dots, \sigma(N), N+1, \dots, M}\in S_M,$
for $M\geq N$. The length of $\sigma$, denoted by $|\sigma|$, is the number of 
inversions of $\sigma$.
That is, $|\sigma| = | \{ (i,j) : i < j; \sigma(i) > \sigma(j) \} |$.

Given a permutation $\sigma\in S_N$, we can associate a code 
$\vecv\in \N^n$,\footnote{This is known as the \emph{Lehmer code} of 
a permutation; see e.g. \cite[Section 2.1]{Manivel}.} by assigning 
$v_i=|\{j: j>i, \sigma(j)<\sigma(i)\}|$.
On the other hand, given a code $\vecv \in \N^n$ we 
associate a permutation $\sigma\in S_N (N \geq n)$ as follows. ($N$ will be clear from the 
construction procedure.) To start, $\sigma(1)$ is assigned as $v_1+1$. $\sigma(2)$ 
is the $(v_2+1)$th number, skipping $\sigma(1)$ if necessary (i.e. if $\sigma(1)$ is within the first $(v_2 + 1)$ numbers). 
$\sigma(k)$ is then the $(v_k+1)$th number, skipping some of $\sigma(1), \dots, \sigma(k-1)$ if 
necessary. 
For example, it can be verified that $\underline{316245}$ gives the code $(2, 0, 3, 0, 0, 0)=(2, 0, 3)$ and vice versa. 

Given a code $\vecv$ its associated permutation is denoted as $\langle 
\vecv\rangle$. Conversely, the code of a permutation $\sigma\in S_N$ is denoted as 
$\acode{\sigma}$. It is clear that $|\sigma|=|\vecv|$.

\paragraph{Bases.} Let $R$ be a (possibly nonproper) 
subring of the polynomial ring 
$\Z[\vecx]$. Suppose $M$ is a basis of $R$ as a $\Z$-module. 
$M$ is usually indexed by some \emph{index set} $\Lambda$, and 
$M=\{t_{\lambda}\mid \lambda\in\Lambda\}$. 
For example $\Lambda=\N^n$ for $R=\Z[\vecx]$, and $\Lambda=\{\text{partitions in 
}\N^n\}$ for $R=\{\text{symmetric polynomials}\}$. 
$f\in R$ can be 
expressed uniquely as $f=\sum_{\lambda \in \Gamma} a_{\lambda} t_{\lambda}$, 
$a_{\lambda}\neq 0\in \Z$, $t_{\lambda}\in M$, and a finite $\Gamma\subseteq 
\Lambda$.

\paragraph{A construction of Klivans and Spielman.} We present a 
construction of Klivans and Spielman that is the key to the derandomization here. 
Given positive integers 
$m$, $n$, and $0<\epsilon<1$, let $t=\lceil
m^2n/\epsilon \rceil$. Let $d$ be another positive integer, and fix a prime $p$ 
larger than $t$ and $d$. Now define a set of $t$ vectors in $\N^n$ as $\ks(m, n, 
\epsilon, d, p):=\{\vecc^{(1)}, \dots, \vecc^{(t)}\}$ by
$
\vecc^{(k)}_i=k^{i-1}\ \mod p,
$ for $i\in[n]$. Note that $\vecc^{(k)}_i\leq p=O(m^2nd/\epsilon)$.
The main property we need from $\ks$ is the following.
\begin{lemma}[{\cite[Lemma 3]{ks01}}]\label{lem:ks}
Let $\vece^{(1)}, \dots, \vece^{(m)}$ be $m$ distinct vectors from $\N^n$ with 
entries in 
$\{0, 1, \dots, d\}$. Then 
\begin{eqnarray}
\Pr_{k\in[t]}[\langle \vecc^{(k)}, \vece^{(j)}\rangle \text{ are distinct for } j\in[m] ]\geq 1- m^2n/t\geq 1-\epsilon \nonumber
\end{eqnarray}
\end{lemma}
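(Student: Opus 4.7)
The plan is to reduce the collision condition to a polynomial having a root modulo $p$, and then union-bound over pairs. For each pair $j\neq j'$ in $[m]$, observe that the inner products $\langle \vecc^{(k)}, \vece^{(j)}\rangle$ and $\langle \vecc^{(k)}, \vece^{(j')}\rangle$ coincide as integers only if their difference vanishes, and because $\vecc^{(k)}_i \equiv k^{i-1} \pmod{p}$, the difference reduces modulo $p$ to
\[
P_{j,j'}(k) \;:=\; \sum_{i=1}^{n} \bigl(\vece^{(j)}_i - \vece^{(j')}_i\bigr) k^{i-1} \pmod{p}.
\]
So the contrapositive reads: if $P_{j,j'}(k)\not\equiv 0\pmod p$, the two inner products differ (as integers, hence in particular as elements of $\Z$).

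Next I would argue that $P_{j,j'}$ is a nonzero polynomial in $\F_p[x]$ of degree at most $n-1$. It is nonzero as a polynomial over $\Z$ because $\vece^{(j)}\neq \vece^{(j')}$, and each coefficient lies in $\{-d,\ldots,d\}$. Since the prime $p$ is chosen to exceed $d$, no nonzero integer coefficient can vanish modulo $p$, so $P_{j,j'}\not\equiv 0$ in $\F_p[x]$ and still has degree at most $n-1$. By the standard fact that a nonzero degree-$r$ polynomial over a field has at most $r$ roots, $P_{j,j'}$ has at most $n-1$ roots in $\F_p$.

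Because $p > t$, the values $1,2,\dots,t$ reduce to distinct residues modulo $p$, so among $k\in[t]$ there are at most $n-1$ values for which $P_{j,j'}(k)\equiv 0 \pmod p$. Taking a union bound over the $\binom{m}{2}\leq m^2/2$ unordered pairs $(j,j')$, the number of $k\in[t]$ for which some pair of inner products collides is at most $\binom{m}{2}(n-1)\leq m^2 n/2$. Dividing by $t$ gives
\[
\Pr_{k\in[t]}\bigl[\text{some pair collides}\bigr] \;\leq\; \frac{m^2 n}{t},
\]
and then the choice $t=\lceil m^2 n/\epsilon\rceil$ yields the claimed $1-\epsilon$ bound.

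The only subtle point, and the one to be careful about, is ensuring that the polynomial $P_{j,j'}$ remains nonzero after reduction modulo $p$; this is exactly where the hypothesis $p>d$ is used. Everything else is a routine application of the Schwartz–Zippel-style root count over $\F_p$ combined with a union bound, and no further properties of the specific vectors $\vecc^{(k)}$ are needed beyond their being evaluations of the Vandermonde-like map $k\mapsto (1,k,k^2,\ldots,k^{n-1})$ modulo $p$.
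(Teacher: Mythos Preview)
The paper does not give its own proof of this lemma; it is simply quoted from \cite{ks01}. Your argument is correct and is exactly the standard one: reduce a collision to a root of a nonzero degree-$(n-1)$ polynomial over $\F_p$ (using $p>d$ to ensure nonvanishing and $p>t$ to ensure $1,\dots,t$ are distinct residues), then union-bound over the $\binom{m}{2}$ pairs.
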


\paragraph{On $\CountP$ and $\VNP$.} The standard definition 
of $\CountP$ is as follows:
function $f:\cup_{n\in \N}\{0, 1\}^n\to \Z$ is in $\CountP$, if there exists a 
polynomial-time Turing machine $M$ and a polynomial $p$, s.t. for any $x\in\{0, 
1\}^n$, $f(x)=|\{y\in\{0, 1\}^{p(n)}$ $\text{ s.t. } M \text{ accepts } (x, 
y)\}|$. 
In the proof of \Cref{thm:eval} in \Cref{sec:sharp_p}, we find it 
handy to consider the class of Turing machines that 
output a nonnegative integer (instead 
of just accept or reject), and functions $f:\cup_{n\in\N}\{0, 1\}^n\to 
\N$ s.t. there exists a 
polynomial-time Turing machine $M$ and a polynomial $p$, s.t. for any $x\in\{0, 
1\}^n$, $f(x)=\sum_{y\in\{0, 1\}^{p(n)}} M(x, y)$. As described in \cite{CSW13}, 
such functions are also in $\CountP$, as we can construct a usual Turing machine 
$M'$, which takes $3$ arguments $(x, y, z)$, and $M'(x, y, z)$ accepts if and 
only if $z<M(x, y)$. Then $\sum_{y, z}M'(x, y, z)=\sum_y M(x, y)$. Note that $z\in\{0, 1\}^{q(n)}$ for some polynomial $q$ as $M$ is polynomial-time.

The reader is referred to \cite{sy2010} for basic notions like arithmetic 
circuits. $\VP$ denotes the class of polynomial families 
$\{f_n\}_{n\in\N}$ s.t. each $f_n$ is a polynomial in $\poly(n)$ variables, of 
$\poly(n)$ degree, and can be computed by an arithmetic circuit of size 
$\poly(n)$. $\VNP$ is the class of polynomials $\{g_n\}_{n\in\N}$ s.t. 
$g_n(x_1, \dots, x_n)=\sum_{(c_1, \dots, c_m)\in\{0, 1\}^m} f_n(x_1, \dots, x_n, 
c_1, \dots, 
c_m)$ where $m=\poly(n)$, and $\{f_n\}_{n\in\N}$ is in $\VP$. Valiant's criterion 
is useful to put polynomial families in $\VNP$. 

\begin{theorem}[Valiant's criterion, \cite{Val79_1}]\label{thm:valiant}
Suppose $\phi:\{0, 1\}^\star\to \N$ is a function in $\CountP/\poly$. Then the 
polynomial family $\{f_n\}_{n\in\N}$ defined by 
$f_n=\sum_{\vece\in\{0,1\}^n}\phi(\vece)\vecx^\vece$ is in $\VNP$.
\end{theorem}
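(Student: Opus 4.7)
The plan is to unfold the hypothesis $\phi\in\CountP/\poly$ to obtain a nondeterministic polynomial-time machine, and then arithmetize it. Specifically, by definition there exists a polynomial-time Turing machine $M$, a polynomial $p$, and an advice sequence $\{a_n\}_{n\in\N}$ with $|a_n|=\poly(n)$, such that for every $\vece\in\{0,1\}^n$,
\[
\phi(\vece)=\bigl|\{\vecy\in\{0,1\}^{p(n)}\;:\;M(\vece,\vecy,a_n)\text{ accepts}\}\bigr|
=\sum_{\vecy\in\{0,1\}^{p(n)}}\chi_M(\vece,\vecy,a_n),
\]
where $\chi_M$ is the $0/1$-valued acceptance predicate of $M$. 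Substituting this into the definition of $f_n$ yields
\[
f_n(\vecx)=\sum_{\vece\in\{0,1\}^n}\sum_{\vecy\in\{0,1\}^{p(n)}}\chi_M(\vece,\vecy,a_n)\cdot\vecx^{\vece},
\]
so it suffices to exhibit a $\VP$ polynomial $F_n(\vecx,\vece,\vecy)$ whose values on $(\vece,\vecy)\in\{0,1\}^{n+p(n)}$ equal the summand.

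I would take $F_n(\vecx,\vece,\vecy):=\chi_M(\vece,\vecy,a_n)\cdot\prod_{i=1}^{n}\bigl(1+e_i(x_i-1)\bigr)$. The second factor is a $\poly(n)$-size arithmetic circuit and agrees with $\vecx^{\vece}$ whenever $e_i\in\{0,1\}$, since $1+e_i(x_i-1)$ evaluates to $1$ when $e_i=0$ and to $x_i$ when $e_i=1$. The first factor must be arithmetized from a Boolean computation into a polynomial identity. To do this, I would use the fact that a polynomial-time Turing machine $M$ running on inputs of total length $n+p(n)+|a_n|$ can be simulated by a Boolean circuit $C_n$ of size $\poly(n)$ (the Cook--Levin tableau construction). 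Then I arithmetize $C_n$ gate-by-gate by replacing $\neg u\mapsto 1-u$, $u\wedge v\mapsto uv$, and $u\vee v\mapsto u+v-uv$, and finally hard-wire the advice bits $a_n$ as constants $0/1$. The resulting arithmetic circuit has size $\poly(n)$ and computes a polynomial $\widetilde{\chi}_M(\vece,\vecy)$ which, by an easy induction on gates, agrees with $\chi_M(\vece,\vecy,a_n)$ on all Boolean assignments $(\vece,\vecy)\in\{0,1\}^{n+p(n)}$. Multiplying by the product above produces $F_n$ as a $\poly(n)$-size arithmetic circuit, so $\{F_n\}\in\VP$.

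Finally, summing over the $(e,y)$ variables in $\{0,1\}^{n+p(n)}$ exhibits $\{f_n\}$ in the $\VNP$ form $f_n(\vecx)=\sum_{(\vece,\vecy)\in\{0,1\}^{n+p(n)}}F_n(\vecx,\vece,\vecy)$, as desired. The main obstacle is the arithmetization step: one must be careful that the arithmetic circuit $\widetilde{\chi}_M$ agrees with $\chi_M$ precisely on Boolean inputs (it may differ off $\{0,1\}^*$, but that is irrelevant for the $\VNP$ sum), and that the non-uniform advice $a_n$ is safely absorbed as constants in the per-$n$ circuit, which is exactly what the $/\poly$ in $\CountP/\poly$ permits. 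All other steps are routine size-accounting.
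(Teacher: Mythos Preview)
The paper does not prove this theorem; it is stated in the preliminaries as a known result with a citation to Valiant's original paper, and is then invoked as a black box in the proof of \Cref{thm:vnp}. So there is no ``paper's own proof'' to compare against.

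That said, your argument is the standard and correct proof of Valiant's criterion: expand $\phi$ via its $\CountP/\poly$ witness, arithmetize the acceptance predicate of the underlying polynomial-time machine through the Cook--Levin construction, use the selector $\prod_i(1+e_i(x_i-1))$ to produce the monomial $\vecx^{\vece}$ on Boolean $\vece$, and absorb the advice as hard-wired constants in the per-$n$ circuit. All steps are sound and the size bounds are routine.
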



\section{Skew Schubert polynomials in $\CountP$ and $\VNP$ }\label{sec:sharp_p}

In this section we first define skew Schubert polynomials via the labeled 
Bruhat order as in \cite{LS03}. We also indicate how Schubert polynomials form a 
special case of skew Schubert polynomials. We then put these polynomials, and 
therefore Schubert polynomials, in $\CountP$ and $\VNP$. 
Also note that it is more convenient to work with permutations instead of codes in this section. 

\paragraph{Definition of skew Schubert polynomials.} The Bruhat 
order on permutations in $S_N$ is defined 
by its covers: for $\sigma, \pi\in S_N$, $\sigma\bcover \pi$ if (i) 
$\sigma^{-1}\pi$ is a transposition $\tau_{ik}$ for $i<k$, and (ii) 
$|\sigma|+1=|\pi|$. Assuming (i), condition (ii) is 
equivalent to: 
\begin{enumerate}
\item[(a)] $\sigma(i) < \sigma(k)$;
\item[(b)] for any $j$ such that $i<j<k$, either $\sigma(j)>\sigma(k)$, or 
$\sigma(j)<\sigma(i)$.
\end{enumerate}
This is because $\pi$ gets an inversion added due to the transposition $\tau_{ik}$.
So in no position between $i$ and $k$ can $\sigma$ take a value between $\sigma(i)$ and $\sigma(k)$.
Else, the number of inversions in $\pi$ will change by more than 1.
Taking the transitive closure gives the Bruhat order ($\leq$). The maximal 
element in Bruhat order is $\pi_0=\underline{N, N-1, \dots, 1}$, whose code is 
$\vecd=(N-1, N-2, \dots, 1)$.

The \emph{labeled Bruhat order} is the key to the definition 
of skew Schubert polynomials. While naming it as an order, it is actually a 
directed graph with multiple labeled edges, with the vertices being the 
permutations in $S_N$. For $\sigma\bcover\pi$ s.t. $\sigma^{-1}\pi=\tau_{st}$, 
$s\leq j< t$ and $b=\sigma(s)=\pi(t)$, add a labeled direct edge as $\sigma 
\xrightarrow{(j, b)} \pi$. That is, for each $\sigma\bcover\pi$, there are $t-s$ edges 
between them. 

For any saturated chain $C$ in this graph, we associate a monomial 
$\vecx^{\vece(C)}$, where $\vece(C)=(e_1, \dots, e_{N-1})$, and $e_i$ counts the 
number of $i$ appearing as the first coordinate of a label in $C$. A chain $$
\sigma_0\xrightarrow{(j_1, b_1)}\sigma_1\xrightarrow{(j_2, 
b_2)}\dots\xrightarrow{(j_m, b_m)} \sigma_m
$$
is \emph{increasing} if its sequence of labels is increasing in the lexicographic 
order on pairs of integers. 

Now we arrive at the definition of skew Schubert polynomials. 
\begin{definition}\label{def:skew}
Let $\vecd$ and $\pi_0$ be as above. Given two 
permutations $\sigma$ and $\pi$, s.t. $\sigma$ is no larger than $\pi$ in the 
Bruhat order ($\sigma\leq \pi$), the \emph{skew Schubert 
polynomial}
\begin{equation}\label{eq:skew}
\schub_{\pi/\sigma}(\vecx):=\sum_{C} \vecx^{\vecd}/\vecx^{\vece(C)},
\end{equation} 
summing over all increasing chains in the labeled Bruhat order from 
$\sigma$ to $\pi$. The \emph{Schubert polynomial}
$\schub_{\sigma}:=\schub_{\pi_0/\sigma}$. 
\end{definition}

Now in the increasing chain (or any chain in the labeled Bruhat order), each edge increases $|\sigma|$ by 1.
So the number of edges in an increasing chain from $\sigma$ to $\pi$ is $|\pi| - |\sigma|$.

\paragraph{Skew Schubert polynomials in $\CountP$ and $\VNP$.} Before describing the 
$\CountP$ algorithm for skew Schubert polynomials, we 
note the following. First, by the correspondence between codes and permutations, 
from a code $\vecv\in\N^n$ we can compute $\langle \vecv\rangle\in S_N$ in time 
polynomial in $|\vecv|$. Also, we have $N=\max_{i\in[n]}\{v_i+i\}\leq |\vecv_i|+n\leq |\vecv_0|+n$. 
Second, the length of the path from $\sigma$ to $\pi$ is $|\pi|-|\sigma|$. 

To start with let us see how \Cref{cor:eval} follows from \Cref{thm:eval}.

\begin{proof}[Proof of \Cref{cor:eval}]
Given $\vecv$, we can compute $\langle \vecv\rangle\in S_N$. Note that $N\leq 
|\vecv|+n$. Then form $\vecw=(N-1, N-2, \dots, 1)$. Invoke \Cref{thm:eval} 
with $(\vecv, \vecw, \veca)$.
\end{proof}

\begin{proof}[Proof of \Cref{thm:eval}] 
Consider the following Turing machine $M$: the input to $M$ are (1) codes $\vecv, \vecw \in \N^m$ in unary; 
(2) $\veca = (a_1, \ldots, a_n) \in \N^n$ in binary; and 
(3) a sequence $\vecs$ of triplets of integers $(s_i,j_i,t_i) \in [N] \times [N] \times [N]$ where $s_i\leq j_i < t_i$, $N= |\vecw|+m$, and 
$i\in[\ell]$, $\ell=|\vecw|-|\vecv|$. 
(Note that all the conditions on $\vecs$ can be checked efficiently.) 
The output of $M$ is a nonnegative integer.

Given the input $(\vecv, \vecw, \veca, \vecs)$, $M$ uses $\vecs$ as a guide to compute an increasing chain from 
$\langle \vecv\rangle$ to $\langle \vecw\rangle$ in the labeled Bruhat order of $S_N$. 
Let $\sigma_0=\langle \vecv\rangle$. 
Suppose the chain to be constructed is $\sigma_0\xrightarrow{(j_1, b_1)} 
\sigma_1\xrightarrow{(j_2, b_2)}\dots\xrightarrow{(j_\ell, b_\ell)} \sigma_\ell$. 
At step $i$, $0\leq i<\ell$, $M$ also maintains an exponent vector $\vece_i\in\N^{N-1}$, and the label $(j_i, b_i)$.

To start, $M$ sets $\vece_0=(0, \dots, 0)$, and $(j_0, b_0)=(0, 0)$ (lexicographically minimal). 
Then when the step $i-1$ finishes, $M$ maintains $\vece_{i-1}$, $\sigma_{i-1}$, and $(j_{i-1}, b_{i-1})$.
Then at step $i$, based on $(s_i, j_i, t_i)$, $M$ performs the following. 
First, it computes $\sigma_i=\sigma_{i-1}\tau_{s_i t_i}$, and checks whether $|\sigma_{i-1}|+1=|\sigma_i|$: 
if equal, then continue; otherwise, return $0$ (not a valid chain). 
Second, it sets $\vece_i$ by adding $1$ to the $j_i$th component of $\vece_{i-1}$, and keeping the other components same. 
Third, it computes $b_i=\sigma_i(t_i)$, and checks whether $(j_i, b_i)$ is larger than $(j_{i-1}, b_{i-1})$ 
in the lexicographic order: if it is larger, then continue; otherwise, return $0$ (not a valid chain).

When the $\ell$th step finishes, $M$ obtains $\sigma_\ell$ and $\vece_\ell$. 
It first checks whether $\sigma_\ell=\langle \vecw\rangle$: if equal, then continue. 
Otherwise, return $0$ (not a valid chain). 
$M$ then computes $\veca^\vecd/\veca^{\vece_\ell}$ as the output. 

This finishes the description of $M$. 
Clearly $M$ runs in time polynomial in the input size. 
$M$ terminates within $\ell$ steps where $\ell=|\vecw|-|\vecv|$, and recall that $\vecw$ and $\vecv$ are given in unary. 

Finally note that $\schub_{\vecw/\vecv}(\veca)$ is equal to $\sum_{\vecs}M(\vecv, \vecw, \veca, \vecs)$, 
where $\vecs$ runs over all sequences of triplets of indices as described at the beginning of the proof. 
By the discussion of $\CountP$ at the end of \Cref{sec:prel}, this puts the 
evaluation of $\schub_{\vecw/\vecv}$ in $\CountP$.
\end{proof}

\begin{proof}[Proof of \Cref{thm:vnp}] Let us outline the proof of \Cref{thm:vnp}, which basically follows from the proof of \Cref{thm:eval}. 
By Valiant's criterion (\Cref{thm:valiant}), to put $\schub_{\vecw/\vecv}$ in $\VNP$, it suffices 
to show that the coefficient of any monomial in $\schub_{\vecw/\vecv}$ is in $\CountP$. 
Therefore we consider the following Turing machine $M'$, which is modified from the Turing machine $M$ in the proof of \Cref{thm:eval} as follows. 
Firstly, $M'$ takes input $(\vecv, \vecw, \veca, \vecs)$, where $\veca$ is thought of as an exponent vector, and is given in unary. 
Second, in the last step, $M'$ checks whether $\vecd-\vece_\ell$ equals $\veca$ or not. 
If equal, then output $1$. Otherwise output $0$. 
It is clear that this gives a $\CountP$ algorithm to compute the coefficient of $\vecx^\veca$. 
The only small problem here is that in the literature (\cite[Prop. 2.20]{B_act}, \cite[Thm. 2.3]{Koi05}) we can find,  
Valiant's criterion is only stated for multilinear polynomials.\footnote{It is 
well-known though that Valiant's criterion works for even non-multilinear 
polynomials. However, the only reference we are aware of is \cite[pp. 10, 
Footnote 1]{Rap_survey}, where no details are provided. Therefore, we describe the 
procedure to overcome this for completeness. }
 But it only takes a little more effort to overcome this; 
essentially, this is because the degree is assumed to be polynomially bounded, so $\veca$ can be given in unary. 
First, note that $N$ (as in the beginning of the proof of \Cref{thm:eval})is an upper bound on the individual degree for each $x_i$. 
Then introduce $N$ copies for each variable $x_i$. 
Since $\veca=(a_1, \dots, a_n)$ is given in unary, we can assume w.l.o.g. that each $a_i$ is an $N$-bit string 
$(a_{i,1}, \dots, a_{i,N})$, and if $a_i=k$, the first $k$ bits are set to $1$, and the rest $0$. 
(These conditions are easy to enforce in the definition of $M'$.) 
We then use $a_{i, j}$ to control whether we have the $j$th copy of $x_i$, or $1$, using the formula $a_{i,j}x_i+1-a_{i,j}$. 
After this slight modification, the Valiant's criterion applies, and the proof is concluded. 
\end{proof}


\section{Sparse interpolation in an interpolation-friendly basis}\label{sec:interpolate}

\subsection{Interpolation-friendly bases}\label{subsec:good}

Let $M=\{t_\lambda\mid \lambda \in \Lambda\}$ be a basis of a $\Z$-module 
$R\subseteq \Z[x_1, \dots, x_n]$, indexed by $\lambda\in\Lambda$. 
Given a function $K:\Lambda \times \N \to \R^+$, $M$ is called \emph{$K$-bounded}, 
if $\forall t_{\lambda}\in M$, the absolute values of the coefficients in the 
monomial expansion of $t_\lambda\in M$, $t_\lambda\in\Z[x_1, \dots, x_n]$ are 
bounded by $K(\lambda, n)$.\footnote{Note that while $\Lambda$ depends on $n$ 
already, we feel that it is clearer to explicitly designate $n$ as a parameter 
of $K$, as seen e.g. in \Cref{prop:schub_bound}.}

For a $(0, 1)$, nonsingular matrix $A$, 
$L_A:=\{\vecc^\star \in (\Q^n)^{\star}\mid \exists \vecc'\in (\Z^+)^n, 
\vecc=A\vecc'\}$. 
Note that $\vecc'$ is a vector with positive integer components. Also recall that 
for $\vecc\in \Q^n$, $\vecc^\star$ denotes the linear form determined by $\vecc$. 
$M$ is \emph{$L_A$-compatible}, if for every $t_\lambda\in M$, we can associate an 
exponent $\vece_\lambda$ s.t. (1) for any $\vecc^\star \in 
L_A$, $\vecc^\star$ achieves the maximum uniquely at $\vece_\lambda$ over 
$E_{t_\lambda}=\{\vece\in\N^n\mid \vecx^\vece\in t_\lambda\}$; 
(2) $\vece_\lambda\neq \vece_{\lambda'}$ for $\lambda\neq 
\lambda'$; (3) the coefficient of $\vecx^{\vece_\lambda}$ in $t_\lambda$ is 
$1$. $\vecx^{\vece_\lambda}$ (resp. $\vece_\lambda$) is called the \emph{leading 
monomial} (resp. \emph{leading exponent}) of $t_\lambda$ w.r.t. $L_A$. By
the conditions (1) and (2), the leading monomials are distinct across 
$M$, and for each $t_\lambda$, the leading monomial is unique. We 
assume that from $\lambda$ it is easy to compute $\vece_\lambda$, and vice versa. 
In fact, for Schubert polynomials, $\lambda$ is from $\Lambda=\Z^n$, 
and $\vece=\lambda$.

Combining the above two definitions, we say $M$ is $(K, 
L_A)$-interpolation-friendly, if 
(1) $M$ is $K$-bounded; (2) $M$ is $L_A$-compatible. We also call it $(K, 
L_A)$-friendly for short, or I-friendly when $K$ and $L_A$ are 
understood from the context.
\begin{itemize}
\item A trivial example is the monomial basis for $\Z[\vecx]$, where $K=1$, $A$ is the identity transformation, and a leading monomial for $\vecx^\vece$ is just itself;
\item For symmetric polynomials, the basis of Schur polynomials (indexed by 
partitions $\alpha$) is $(K, L_A)$-friendly, where (1) $K(\alpha, 
n)=\sqrt{|\alpha|!}$ 
by \Cref{prop:schur_bound}, (2) 
$A=(r_{i,j})_{i,j\in[n]}$ where $r_{i,j}=0$ if $i>j$, and $1$ otherwise, by
the fact that every exponent 
vector in $\schur_\alpha$ is dominated by $\alpha$ (\cite[Sec. 
2.2]{BF97}). ($A$ is the upper triangular matrix with $1$'s on the diagonal 
and above.)
The associated leading monomial for $\schur_\alpha$ is $\vecx^\alpha$.
In retrospect, the fact that Schur polynomials form an I-friendly basis is the 
mathematical support of the Barvinok-Fomin algorithm \cite{BF97}.
\end{itemize}

\subsection{Sparse interpolation in an interpolation-friendly basis}

In this section we perform deterministic sparse polynomial interpolation in an interpolation-friendly basis. 
The idea is to combine the structure of the Barvinok-Fomin algorithm with some ingredients from the Klivans-Spielman algorithm. 

We first briefly review the idea of the Klivans-Spielman algorithm \cite[Section 6.3]{ks01}. 
Suppose we want to interpolate $f\in\Z[x_1, \dots, x_n]$ of degree $d$ with $m$ monomials. 
Their algorithm makes use of the map
\begin{equation}\label{eq:phi}
\phi_{\vecc}(x_1, \dots, x_n)=(y^{c_1}, \dots, y^{c_n})
\end{equation} 
where $\vecc=(c_1, \dots, c_n)\in (\Z^+)^n$.
If $\vecc$ satisfies the property:
$
\forall \vece\neq \vece'\in f, \langle \vecc, \vece\rangle \neq \langle \vecc, \vece'\rangle,
$
then we can reduce interpolation of multivariate polynomials to interpolation of univariate polynomials: 
first apply the univariate polynomial interpolation (based on the Vandermonde matrix) to get a set of coefficients. 
Then to recover the exponents, modify $\phi_\vecc$ as
\begin{equation}\label{eq:phi_prime}
\phi_{\vecc}'(x_1, \dots, x_n)=(p_1 y^{c_1}, \dots, p_n y^{c_n}),
\end{equation}
where $p_i$'s are distinct primes, and get another set of coefficients. 
Comparing the two sets of coefficients we can compute the exponents. 
Note that the components of $\vecc$ need to be small for the univariate interpolation to be efficient. 
To obtain such a $\vecc$, Klivans and Spielman exhibit a small -- polynomial in $n$, $m$, $d$ 
and an error probability $\epsilon\in(0, 1)$ -- set of test vectors $\vecc$ s.t. with probability $1-\epsilon$, a vector from this set 
satisfies the above property. 
Furthermore, the components of these vectors are bounded by $O(m^2nd/\epsilon)$. 
Their construction was reviewed in \Cref{lem:ks}, \Cref{sec:prel}.

Now suppose $M$ is a $(K, L_A)$-friendly basis for $R\subseteq \Q[\vecx]$, and we 
want 
to recover $f=\sum_{\lambda\in \Gamma}a_\lambda t_\lambda$ of degree $\leq d$, and 
$|\Gamma|=m$. To apply the above idea to an arbitrary I-friendly basis $M$, the 
natural 
strategy is to extract the leading monomials w.r.t. $L_A$. However, as each basis 
polynomial can be quite complicated, there are many other non-leading monomials 
which may interfere with the leading ones. Specifically, we need to explain the 
following:
\begin{enumerate}
\item[(1)] Whether extremely large coefficients appear after the map $\phi_\vecc$, 
therefore causing the univariate interpolation procedure to be inefficient?
\item[(2)] Whether some leading monomials are preserved after the map 
$\phi_\vecc$? (That is, will the image of every leading monomial under 
$\phi_\vecc$ be cancelled by non-leading monomials?)
\end{enumerate}

It is immediate to realize that I-friendly bases are designed to overcome the above issues. 
(1) is easy: by the $K$-bounded property, for any monomial $\vecx^\vecu$ in $f$, the absolute value of $\coeff(\vecu, f)$ is bounded by 
$K\cdot (\sum_\lambda |a_\lambda|)$. 
Therefore, the coefficients of the image of $f$ under $\phi_\vecc$ are bounded by $O\big(\binom{n+d}{d} \cdot K\cdot (\sum_\lambda|a_\lambda|)\big)$.
(2) is not hard to overcome either; see the proof of \Cref{thm:interpolate_good} below. 
These properties are used implicitly in the Barvinok-Fomin algorithm. 

There is one final note: if, unlike in the monomial basis case, the procedure cannot produce all leading monomials at one shot, 
we may need to get one $t_\lambda$ and its coefficient, subtract that off, and recurse. 
This requires us to compute $t_\lambda$'s efficiently. 
As this is the property of $t_\lambda$, not directly related to the interpolation problem, 
we assume an oracle which takes an index $\lambda\in\Lambda$ and an input $\veca\in\N^n$, and returns $t_\lambda(\veca)$. 

\begin{theorem}\label{thm:interpolate_good}
Let $M=\{t_{\lambda}\mid \lambda \in \Lambda\}$ be a $(K, L_A)$-friendly basis for 
$R\subseteq \Z[\vecx]$, $K=K(\lambda, n)$, $A$ a $(0, 1)$ invertible matrix. Given 
an access to an 
oracle $\oracle=\oracle(\lambda, \veca)$ that computes basis polynomial 
$t_{\lambda}(\veca)$ for $\veca\in\N^n$, there exists a deterministic algorithm 
that, given a black box containing $f=\sum_{\lambda\in\Gamma}a_\lambda t_\lambda$, 
$a_\lambda\neq 0\in \Z$ with the promise that $\deg(f)\leq d$ and $|\Gamma|\leq 
m$, computes such an expansion 
of $f$ in time $\poly(n, d, m, \log(\sum_\lambda|a_\lambda|), \log K)$.
\end{theorem}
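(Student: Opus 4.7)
The plan is to combine the iterative reduction of the Barvinok--Fomin algorithm with the derandomized univariate substitution of Klivans--Spielman. The algorithm will run for at most $m$ rounds; in each round it extracts one pair $(\lambda^\star,a_{\lambda^\star})$ from the expansion of $f$, and then replaces the black box for $f$ by the residual $f-a_{\lambda^\star}t_{\lambda^\star}$, which is simulated by answering $(f-a_{\lambda^\star}t_{\lambda^\star})(\veca)=f(\veca)-a_{\lambda^\star}\oracle(\lambda^\star,\veca)$ on every query $\veca\in\N^n$. After at most $m$ rounds the residual vanishes, and we output the accumulated expansion.

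For a single round, fix a linear form $\vecc\in L_A$ (so $\vecc=A\vecc'$ for some $\vecc'\in(\Z^+)^n$) together with distinct primes $p_1,\dots,p_n$, and form the univariate polynomials $g(y)=f(\phi_\vecc(y))$ and $g'(y)=f(\phi_\vecc'(y))$ in the notation of (\ref{eq:phi}) and (\ref{eq:phi_prime}). Because $M$ is $L_A$-compatible, each basis element $t_\lambda$ satisfies $\langle\vecc,\vece_\lambda\rangle>\langle\vecc,\vece\rangle$ for every non-leading exponent $\vece$ of $t_\lambda$, and the leading coefficient is $1$; hence $\phi_\vecc(t_\lambda)$ has $y$-degree exactly $\langle\vecc,\vece_\lambda\rangle$ with coefficient $1$. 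If in addition $\vecc$ \emph{separates} the current leading exponents -- that is, the values $\langle\vecc,\vece_\lambda\rangle$ are pairwise distinct over $\lambda\in\Gamma$ -- then the top $y$-degree $D$ of $g$ is attained at a unique $\lambda^\star\in\Gamma$, and the coefficients of $y^D$ in $g$ and $g'$ are $a_{\lambda^\star}$ and $a_{\lambda^\star}\prod_i p_i^{(\vece_{\lambda^\star})_i}$ respectively. Dividing and factoring over $p_1,\dots,p_n$ recovers $\vece_{\lambda^\star}$, and hence $\lambda^\star$. The polynomials $g,g'$ are obtained by Vandermonde-based univariate interpolation using black-box queries; since $\deg g,\deg g'\le d\max_i c_i$, polynomially many queries suffice.

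To provide a separating $\vecc$ deterministically, I invoke \Cref{lem:ks} with a small constant $\epsilon<1$ applied to the vectors $\{A^T\vece_\lambda:\lambda\in\Gamma\}$; these are distinct (since $A$ is invertible and the $\vece_\lambda$ are) with entries bounded by $nd$. KS produces a set $S$ of $O(m^2n)$ integer vectors $\vecc'$ with entries of size $\poly(n,m,d)$, at least a $(1-\epsilon)$-fraction of which satisfy that $\langle A\vecc',\vece_\lambda\rangle=\langle\vecc',A^T\vece_\lambda\rangle$ are pairwise distinct over $\Gamma$. Setting $\vecc:=A\vecc'\in L_A$ keeps entries polynomially bounded. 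I enumerate $\vecc'\in S$, attempt the extraction, and accept a candidate $(\lambda,b)$ only when the observed ratio factors cleanly as $\prod_i p_i^{e_i}$ with $e_i\in\N$ \emph{and} the induced $\vece$ satisfies $\langle\vecc,\vece\rangle=D$; by the analysis above every separating $\vecc$ is accepted and produces a genuine term of $f$, whereas a ``tied'' $\vecc$ yields a ratio $b'/b=(\sum_{i\in I}a_{\lambda_i}P_i)/(\sum_{i\in I}a_{\lambda_i})$ that generically fails one of the two consistency checks.

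The main delicacy is the bit-complexity control underlying the $\log K$ and $\log\sum|a_\lambda|$ factors in the promised running time. The $K$-bounded hypothesis gives $|\coeff(\vecu,f)|\le m\cdot K(\lambda,n)\cdot\max_\lambda|a_\lambda|$ for every monomial $\vecx^\vecu$ of $f$, so evaluations of $f$ at the polynomially-large inputs $(y^{c_1},\dots,y^{c_n})$ produce integers of bit-length $\poly(n,d,m,\log K,\log\sum|a_\lambda|)$; the Vandermonde linear solve and the subsequent division and trial-factorization over $p_1,\dots,p_n$ are performed exactly in $\Q$ within the same bit budget. Summed over $\le m$ rounds, each with $|S|$ substitutions and $\poly(n,d,m)$ oracle queries apiece, this yields the claimed total running time. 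The hardest conceptual step is precisely the verification filter: one must argue that the deterministic enumeration cannot be fooled by a ``tied'' $\vecc$ into committing a spurious term, which would corrupt the subsequent recursion. The prime-factorization plus degree-matching check is designed exactly to reject such $\vecc$, and this is where the detailed technical work of the proof lies.
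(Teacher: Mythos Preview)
Your overall architecture---Klivans--Spielman substitution, extract the top monomial via $\phi_\vecc$ and $\phi'_\vecc$, peel off one term, recurse---matches the paper. The genuine gap is your handling of non-separating $\vecc$'s. You propose a per-round verification filter (clean prime factorization of $b'/b$ plus the degree check $\langle\vecc,\vece\rangle=D$) and then write that a tied $\vecc$ ``\emph{generically} fails one of the two consistency checks.'' That word is doing too much work: for a deterministic algorithm you need the filter to \emph{always} reject tied $\vecc$'s, and it does not. If several leading exponents collide at degree $D$, the ratio $\big(\sum_{i\in I}a_{\lambda_i}\prod_j p_j^{(\vece_{\lambda_i})_j}\big)\big/\big(\sum_{i\in I}a_{\lambda_i}\big)$ can, for suitable integer $a_{\lambda_i}$, equal a clean product $\prod_j p_j^{e_j}$ with $\langle\vecc,\vece\rangle=D$ but $\vece\notin\{\vece_\lambda:\lambda\in\Gamma\}$; once you subtract the spurious $a_\nu t_\nu$ the residual is no longer an $m$-term element of $R$ and the recursion is corrupted. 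You flag this as ``where the detailed technical work lies,'' but no amount of work on this particular filter will close the hole.

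The paper avoids the issue entirely by a different control structure: rather than filtering per round, it fixes a single $\vecc$ from the KS set, runs \emph{all} $m$ extraction rounds with that $\vecc$ to produce a candidate expansion $f_\vecc$, does this for every $\vecc$ in the set, and then takes the majority over the $f_\vecc$'s. The point is that if $\vecc$ separates the full $\Gamma$, it automatically separates every subset encountered during the peel-off, so a distinguishing $\vecc$ yields the correct expansion in every round; since $\ks(m,n,1/3,nd,p)$ guarantees at least a $2/3$ fraction of distinguishing $\vecc$'s, the majority is correct. This replaces your unreliable local test with a global consistency argument at essentially no extra cost.
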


\begin{proof}
We first present the algorithm. Recall the maps $\phi_\vecc$ and $\phi'_\vecc$ 
defined in \Cref{eq:phi} and \Cref{eq:phi_prime}.
\begin{description}
\item[Input:] A black box $\bbox$ containing $f\in R\subseteq \Z[x_1, \dots, x_n]$ 
with the promises: 
(1) $\deg(f) \leq d$; (2) $f$ has $\leq m$ terms in the $M$-basis. 
An oracle $\oracle=\oracle(\lambda, \veca)$ computing $t_{\lambda}(\veca)$ for $\veca\in\N^n$. 

\item[Output:] The expansion $f=\sum_{\lambda\in\Gamma}a_\lambda t_\lambda$.
\item[Algorithm:] \begin{enumerate}
\item By \Cref{lem:ks}, construct the Klivans-Spielman set $\ks=\ks(m, n, 1/3, nd, 
p)$. 
\item For every vector $\vecc$ in $\ks$, do: 
\begin{enumerate}
\item $f_\vecc\leftarrow 0$. $i\leftarrow 0$. $\vecd\leftarrow A\vecc$.
\item While $i< m$, do:
\begin{enumerate}
\item Apply the map $\phi_\vecd$ to $\bbox-f_\vecc$ (with the help of $\oracle$), 
and use the univariate interpolation 
algorithm to obtain $g(y)=\sum_{i=0}^{k}b_iy^i$. If $g(x)\equiv 0$, 
break.
\item Apply the map $\phi_\vecd'$ to $\bbox-f_\vecc$ (with the help of $\oracle$), 
and use the univariate interpolation algorithm to obtain 
$g'(y)=\sum_{i=0}^{k}b_i'y^i$.
\item From $b_k$ and $b_k'$, compute the corresponding monomial $\vecx^\vece$, and 
its coefficient $a_\vece$. 
From $\vecx^\vece$, compute the corresponding label $\nu\in \Lambda$, and set 
$a_\nu\leftarrow a_\vece$. 
\item $f_\vecc\leftarrow f_\vecc+a_\nu t_\nu$. $i\leftarrow i+1$.
\end{enumerate}
\end{enumerate}
\item Take the majority of $f_\vecc$ over $\vecc$ and output it.
\end{enumerate}
\end{description}

We prove the correctness of the above algorithm. 
As before, let $\vece_\lambda$ be the leading vector of $t_\lambda$ w.r.t. $L_A$. 
Note that as $A^{T}$ is $(0, 1)$-matrix, entries in the vector 
$A^{T}\vece_\lambda$ are in $\{0, 1, \dots, nd\}$. By the property of $\ks(m, 
n, 1/3, nd, p)$, no less than $2/3$ fraction of the vectors from $\vecc\in\ks$ 
satisfy that $\langle \vecc, A^{T}\vece_\lambda\rangle$ are distinct over $\lambda\in\Gamma$; call 
these vectors ``distinguishing.'' We shall show that for any distinguishing 
vector, the algorithm outputs the correct expansion, so step (3) would succeed.

Fix a distinguishing vector $\vecc$. As $\vece_\lambda\neq \vece_{\lambda'}$ for 
$\lambda\neq \lambda'$ and $A$ is invertible, there exists a unique 
$\nu\in\Lambda$ s.t. $\vece_\nu$ achieves the maximum at $\langle\vecc, 
A^T\vece_\lambda\rangle$ over $\lambda\in\Gamma$. 
As $\langle \vecc, A^{T}\vece_\lambda\rangle=\langle A\vecc, 
\vece_\lambda\rangle=\langle \vecd, \vece_\lambda\rangle$, 
by the definition of $M$ being $L_A$-compatible, 
we know that within each $t_\lambda$, $\vecd^\star$ achieves the  unique 
maximum at $\vece_\lambda$ over $E_{t_\lambda}$, the set of all exponent vecors in 
$t_\lambda$. Thus over all $\vece\in f$, 
$\vecd^\star$ achieves the unique maximum at $\vece_\nu$. This means that 
$y^{\langle\vecd,\vece_\nu\rangle}$ is the monomial in $g(y)$ of maximum degree, 
and could not be affected by other terms. As 
$\coeff(\vece_\nu, t_\nu)=1$, $\coeff(\vece_\nu, f)$ is just 
the coefficient of $t_\nu$ in the expansion of $f$. So we have justified that from 
Step (2.b.i) to (2.b.iii), the algorithm extracts the monomial of maximum degree 
in $g(y)$, 
computes the corresponding coefficient and exponent in $f$, and interprets as a 
term $a_\lambda t_\lambda$.

To continue computing other terms, we just need to note that $\vecc$ is still 
distinguishing 
w.r.t. ($f-$ some of the terms within $f$). This justifies Step (2.b.iv).

To analyze the running time, the FOR-loop in Step (2) and the WHILE-loop 
in Step (2.b) take $O(m^2n)$ and $m$ rounds, respectively. In the univariate 
polynomial interpolation step, as the components in 
$\vecc$ are bounded by $O(m^2n\cdot nd)$ and $A$ is $(0, 1)$, the components in 
$\vecd$ are bounded by 
$O(m^2n^3d)$. It follows that $k=\deg(g)=\langle\vecd, 
\vece_\nu\rangle=O(m^2n^3d^2)$. By the $K$-bounded property, the coefficients of 
$g(y)$ are of magnitude $O\big(\binom{n+d}{d}\cdot K\cdot 
(\sum_\lambda|a_\lambda|)\big)$. So the running time 
for the univariate interpolation step, and therefore for the whole algorithm, 
is $\poly(m, n, d, \log(\sum_\lambda|a_\lambda|),\log K)$. 
\end{proof}


\section{Schubert polynomials form an interpolation-friendly basis}\label{sec:schub_good}

In this section, our ultimate goal is to prove \Cref{prop:schub_bound} and 
\Cref{prop:schub_friendly}, which establish that Schubert polynomials form an 
I-friendly 
basis. The main theorem \Cref{thm:interpolate} follows immediately. For this, we 
need to review some properties of Schur polynomials, 
the definition of Schubert polynomials via divided differences, and the transition 
formula \cite{LS85}. The transition formula is the main technical tool to deduce 
the properties of Schubert polynomials we shall need for \Cref{prop:schub_bound} 
and \Cref{prop:schub_friendly}. These include \Cref{lem:dom} which helps us to find 
the matrix $A$ needed for the $L_A$-compatible property, and an alterative proof 
for Schubert polynomial in $\CountP$.

\paragraph{Schur polynomials.} For a positive integer $\ell$, the complete 
symmetric polynomial 
$h_\ell(\vecx)\in\Z[\vecx]$ is the sum over all monomials of degree $\ell$, with 
coefficient $1$ for every monomial. We also define $h_0(\vecx)=1$, and 
$h_\ell(\vecx)=0$ for any $\ell < 0$. For a partition $\alpha=(\alpha_1, \dots, 
\alpha_n)$ in $\N^n$, the Schur polynomial $\schur_\alpha(\vecx)$ in $\Z[\vecx]$ 
can be defined by the Jacobi-Trudi formula as 
$\schur_\alpha(\vecx)=\det[h_{\alpha_i-i+j}(\vecx)]_{i, j\in[n]}.$
Note that $\deg(\schur_\alpha(\vecx))=|\alpha|$.
Via this determinantal expression, we have
\begin{proposition}[{\cite[Sec. 2.4]{BF97}}]\label{prop:compute_schur}
For $\veca\in\Z^n$, $\schur_\alpha(\veca)$ can be computed using 
$O(|\alpha|^2\cdot n + n^3)$ 
arithmetic operations, and the bit lengths of intermediate numbers are polynomial 
in those of $\veca$.
\end{proposition}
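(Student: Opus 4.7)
The plan is to exploit the Jacobi-Trudi determinantal formula recalled just above the statement, $\schur_\alpha(\veca) = \det\bigl[h_{\alpha_i - i + j}(\veca)\bigr]_{i,j \in [n]}$, and reduce the computation to two subtasks: (a) tabulate the $h_\ell(\veca)$ values that appear as matrix entries, and (b) compute the determinant of the resulting $n \times n$ integer matrix.

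For (a), note that the indices $\alpha_i - i + j$ with $i, j \in [n]$ lie in $\{0, 1, \dots, \alpha_1 + n - 1\} \subseteq \{0, 1, \dots, L\}$ with $L := |\alpha| + n$. To obtain $h_0(\veca), h_1(\veca), \dots, h_L(\veca)$ in one sweep I would fill a two-dimensional table $H[k][\ell] := h_\ell(a_1, \dots, a_k)$ using the elementary recurrence
\[
H[k][\ell] \;=\; H[k-1][\ell] \;+\; a_k \cdot H[k][\ell-1],
\]
with base cases $H[k][0] = 1$ and $H[0][\ell] = 0$ for $\ell \geq 1$, which follows from splitting monomials of degree $\ell$ according to whether $x_k$ appears. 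This costs $O(1)$ operations per cell, for a total of $O(n(|\alpha|+n))$ operations, comfortably within the claimed bound. The needed matrix entries are then read off as $H[n][\alpha_i-i+j]$.

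For (b), I would invoke a fraction-free method such as the Bareiss algorithm on the $n \times n$ integer matrix $[H[n][\alpha_i-i+j]]_{i,j}$; this uses $O(n^3)$ arithmetic operations and, crucially for the bit-length claim, keeps every intermediate quantity as an integer subdeterminant of the starting matrix. Summing the two costs yields the stated $O(|\alpha|^2 n + n^3)$.

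The only step warranting real care is the bit-length bookkeeping, which I view as the main, though modest, obstacle. For it, observe that $h_\ell(\veca)$ is a nonnegative combination of at most $\binom{n+\ell-1}{\ell}$ monomials of total degree $\ell$ in $\veca$, so $|h_\ell(\veca)| \leq \binom{n+\ell-1}{\ell} \cdot (\max_i|a_i|)^\ell$; taking logarithms gives a bound polynomial in $n$, $|\alpha|$ and the bit length of $\veca$, and the recurrence preserves this bound entry-by-entry. Hadamard's inequality then controls every subdeterminant produced by Bareiss by the product of row norms of the Jacobi-Trudi matrix, so all intermediates remain integers of polynomially bounded bit length, as required.
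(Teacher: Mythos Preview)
The paper does not supply its own proof of this proposition; it is simply cited from \cite[Sec.~2.4]{BF97}. Your argument is correct and is exactly the natural reading of that reference: use the Jacobi--Trudi identity stated immediately before the proposition, tabulate the needed values $h_\ell(\veca)$ via the standard recurrence, and then compute an $n\times n$ integer determinant. Both the operation count and the bit-length analysis go through as you describe.

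One minor remark: your table-filling step actually achieves $O(n(|\alpha|+n))$ operations, which is stronger than the stated $O(|\alpha|^2 n)$ term, so the proposition's bound is comfortably met. Also, the phrase ``polynomial in those of $\veca$'' in the statement should be read as ``polynomial in the input size'' (i.e.\ in $n$, $|\alpha|$, and the bit length of $\veca$), which is precisely what your Hadamard-based estimate delivers.
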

Littlewood's theorem shows Schur polynomials have positive 
coefficients. We also need the following bound on coefficients -- the Kostka 
numbers -- in $\schur_\alpha(\vecx)$.
\begin{proposition}[{\cite[Sec. 2.2]{BF97}}]\label{prop:schur_bound}
For any $\vece\in\N^n$, $0\leq \coeff(\vece, \schur_\alpha(\vecx))\leq 
\sqrt{|\alpha|!}$.
\end{proposition}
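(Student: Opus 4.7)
The plan is to interpret $\coeff(\vece, \schur_\alpha)$ combinatorially as a Kostka number $K_{\alpha,\vece}$ and then bound it through two successive reductions. First, starting from the Jacobi--Trudi definition $\schur_\alpha=\det[h_{\alpha_i-i+j}]_{i,j\in[n]}$, I would expand each $h_{\alpha_i-i+j}$ as a sum of monomials and invoke the Lindstr\"om--Gessel--Viennot lemma on the standard lattice path model for non-intersecting paths in a grid. The sign-reversing involution on intersecting path tuples cancels all negative contributions in the determinant, and the surviving non-intersecting tuples are in bijection with semistandard Young tableaux (SSYT) of shape $\alpha$ whose content records the exponent $\vece$. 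This gives $\coeff(\vece,\schur_\alpha)=K_{\alpha,\vece}=\#\{\text{SSYT of shape }\alpha,\text{ content }\vece\}\geq 0$, which takes care of the lower bound.

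For the upper bound, I would use a \emph{standardization} map sending any SSYT $T$ of shape $\alpha$ and content $\vece$ to a standard Young tableau (SYT) of shape $\alpha$. Concretely, relabel the $\vece_1$ entries equal to $1$ in $T$ by the symbols $1,2,\dots,\vece_1$ read in column order (top to bottom within each column, columns scanned left to right), then relabel the $\vece_2$ entries equal to $2$ by $\vece_1+1,\dots,\vece_1+\vece_2$ in the same way, and so on. Since equal values in a SSYT cannot share a column, the resulting tableau is strictly increasing in both rows and columns, hence a SYT. Moreover, knowing the fixed content vector $\vece$ lets us invert this process, so the map is injective. Therefore $K_{\alpha,\vece}\leq f^\alpha$, where $f^\alpha$ denotes the number of SYT of shape $\alpha$.

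Finally, I would bound $f^\alpha$ using the classical identity
\begin{equation*}
\sum_{\beta\vdash |\alpha|}(f^\beta)^2 \;=\; |\alpha|!,
\end{equation*}
which follows from the RSK correspondence (equivalently, from the decomposition of the regular representation of $S_{|\alpha|}$ into irreducibles). Since every summand is non-negative, $(f^\alpha)^2\leq |\alpha|!$, so $f^\alpha\leq \sqrt{|\alpha|!}$, completing the chain $0\leq K_{\alpha,\vece}\leq f^\alpha\leq \sqrt{|\alpha|!}$.

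The main obstacle I anticipate is the clean verification that standardization truly preserves the tableau conditions and is injective; this is routine but demands a careful choice of reading order so that equal entries, which are forbidden in the same column of a SSYT, receive labels consistent with column-strict increase. The Lindstr\"om--Gessel--Viennot step is also a nontrivial technical ingredient, but it is standard in the literature on Schur polynomials and can be invoked as a black box. Once those two combinatorial ingredients are in place, the representation-theoretic bound is immediate.
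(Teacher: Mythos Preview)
Your argument is correct and is precisely the standard one: the coefficients of $\schur_\alpha$ are the Kostka numbers $K_{\alpha,\vece}$, standardization injects SSYT of fixed content into SYT of the same shape, and the RSK identity $\sum_{\beta\vdash |\alpha|}(f^\beta)^2=|\alpha|!$ then gives $f^\alpha\le\sqrt{|\alpha|!}$. The paper does not supply its own proof here but simply cites \cite[Sec.~2.2]{BF97}, where exactly this chain of inequalities is recorded; so your proposal is not an alternative route but a spelled-out version of what the paper invokes as a black box.

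One small quibble on presentation: your description of the reading order for standardization (``top to bottom within each column, columns left to right'') is fine because equal entries in a SSYT never share a column, but the more common phrasing is simply ``left to right'' across the horizontal strip of $k$'s, which makes the row-increase check immediate. Either way the injectivity is clear once the content $\vece$ is fixed.
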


\paragraph{Definition of Schubert polynomials via divided differences.} 

We follow the approach in \cite{Lascoux_symmetric,Lascoux_parallel}.
For $i\in[n-1]$, let 
$\chi_i$ be the switching operator on $\Z[x_1, \dots, x_n]$: 
$$f^{\chi_i}(x_1, \dots, 
x_i, x_{i+1}, \dots, x_n):=f(x_1, \dots, x_{i+1}, x_i, \dots, x_n).
$$
Then the divided difference operator $\partial_i$ on $\Z[x_1, \dots, x_n]$ is 
$\partial_i(f):= \frac{f-f^{\chi_i}}{x_i-x_{i+1}}.$

\begin{definition}\label{def:schub}
For $\vecv=(v_1, \dots, v_n)\in \N^n$, the Schubert polynomial $Y_\vecv\in \Z[x_1, 
\dots, x_n]$ is defined recursively as follows:
\begin{enumerate}
\item If $\vecv$ is dominant, then $Y_\vecv=x_1^{v_1} x_2^{v_2}\dots x_n^{v_n}$.
\item If $v_i>v_{i+1}$, then $Y_{\vecv'}=\partial_i Y_\vecv$ where 
$\vecv'=(v_1, \dots, v_{i+1}, v_i-1, \dots, v_n)$.
\end{enumerate}
\end{definition}

It is not hard to see that this defines $Y_{\vecv}$ for any $\vecv$.
We list some basic facts about Schubert polynomials.
\begin{fact}[{\cite{Manivel}}]\label{fact:schub}
\begin{enumerate}
\item If $\vecv=(v_1, \dots, v_n)$ is anti-dominant with $k$ being the last 
nonzero index, $Y_\vecv$ equals the Schur polynomial $\schur_\alpha(x_1, \dots, 
x_k)$ where $\alpha=(v_k, \dots, v_1)$.
\item As a special case of (1), if $\vecv=(0, \dots, 0, w, 0, \dots 0)$ where $w$ 
is at the $k$th position, 
then $Y_\vecv(\vecx)$ is the complete homogeneous symmetric polynomial 
$h_w(x_1, \dots, x_k)$.
\item If $\vecv=[v_1, \dots, v_k, 0, \dots, 0]\in \N^n$, then 
$Y_{\vecv}\in\Z[x_1, \dots, x_k]$.
\end{enumerate}
\end{fact}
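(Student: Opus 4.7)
The plan is to establish the three parts by unpacking the divided-difference definition \Cref{def:schub}. Parts (1) and (2) are the classical identification of Schubert polynomials indexed by anti-dominant codes (equivalently, Grassmannian permutations with descent at $k$) with Schur polynomials, and part (3) is the stability-of-variables property.

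For part (1), given anti-dominant $\vecv=(v_1,\dots,v_k,0,\dots,0)$ with $v_1\leq v_2\leq\dots\leq v_k$, I would set $\alpha=(v_k,v_{k-1},\dots,v_1)$ and consider the dominant code
\[
\beta=(\alpha_1+k-1,\ \alpha_2+k-2,\ \dots,\ \alpha_{k-1}+1,\ \alpha_k,\ 0,\dots,0)\in\N^n,
\]
which is genuinely dominant because $\alpha_i\geq \alpha_{i+1}$. The base case of \Cref{def:schub} gives $Y_\beta=x_1^{\alpha_1+k-1}x_2^{\alpha_2+k-2}\cdots x_k^{\alpha_k}$. I then apply a sequence of $\binom{k}{2}$ divided differences corresponding to a reduced decomposition of the longest element $w_0^{(k)}$ of $S_k$ (for instance the standard staircase word $s_1;\,s_2s_1;\,s_3s_2s_1;\,\dots;\,s_{k-1}\cdots s_1$). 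The main verification is an induction showing (i) that at each stage the precondition $v_i>v_{i+1}$ required by \Cref{def:schub} is satisfied so that the recursion is applicable, and (ii) that after all $\binom{k}{2}$ steps the code reached is exactly $\vecv$. Once this derivation is in place, the classical identity
\[
\partial_{w_0^{(k)}}(x_1^{\mu_1}x_2^{\mu_2}\cdots x_k^{\mu_k})=\frac{\det(x_i^{\mu_j})_{i,j\in[k]}}{\prod_{i<j}(x_i-x_j)}
\]
applied with $\mu_i=\alpha_i+k-i$ produces the bialternant formula for $\schur_\alpha(x_1,\dots,x_k)$, closing part (1). Part (2) is then immediate as the specialization $\alpha=(w,0,\dots,0)$, for which the Jacobi-Trudi determinant collapses to the single entry $h_w(x_1,\dots,x_k)$.

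For part (3), I would invoke the pipe dream / RC-graph formula (as hinted at in the paper's footnote and in \cite{BB93}), writing $Y_{\vecv}=\sum_{D}\prod_{(i,j)\in D}x_i$, summed over reduced pipe dreams $D$ for $\langle\vecv\rangle$. Since the Rothe diagram of $\langle\vecv\rangle$ has its $i$th row of size exactly $v_i$, no pipe-dream cross is ever placed in a row with index $>k$ when $v_{k+1}=\dots=v_n=0$, so $Y_\vecv\in\Z[x_1,\dots,x_k]$. Alternatively, one can argue directly from \Cref{def:schub}: a dominant ancestor supported on $[k]$ of weight $|\vecv|+\binom{k}{2}$ can be constructed (as in part (1), replacing $\alpha$ by the decreasing rearrangement of $(v_1,\dots,v_k)$), and the entire descending derivation uses only $\partial_i$ with $i<k$, each of which maps $\Z[x_1,\dots,x_k]$ to itself.

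The main obstacle is the combinatorial bookkeeping in part (1): committing to a specific reduced word for $w_0^{(k)}$ and inductively tracking the intermediate codes, verifying at every step that the precondition $v_i>v_{i+1}$ of \Cref{def:schub} remains in force so the recursion is actually applicable. Once this descending chain is pinned down, the bialternant identity for $\partial_{w_0^{(k)}}$ provides the standard closing move to identify the result with $\schur_\alpha$.
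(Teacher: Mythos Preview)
The paper does not prove this fact; it is quoted from \cite{Manivel} as background. Your proposal therefore supplies a proof the paper simply omits. Your approach to parts (1) and (2) is the standard one and is correct: descend from the strictly dominant code $\beta=(\alpha_1+k-1,\dots,\alpha_k)$ by $\partial_{w_0^{(k)}}$ and invoke the bialternant identity. The bookkeeping you flag is routine once one checks that $\langle\beta\rangle\cdot w_0^{(k)}=\langle\vecv\rangle$ with lengths subtracting, so that every step of any reduced word for $w_0^{(k)}$ meets the hypothesis of \Cref{def:schub}.

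Your RC-graph argument for (3) is also valid. However, your proposed \emph{alternative} for (3) has a real gap. The recipe ``replace $\alpha$ by the decreasing rearrangement of $(v_1,\dots,v_k)$ and form $\beta$ as in part (1)'' does not in general yield a dominant ancestor of $\vecv$ reachable using only $\partial_1,\dots,\partial_{k-1}$. Take $\vecv=(2,0,1)$ with $k=3$: your recipe gives $\beta=(4,2,0)$, but an exhaustive check shows the only weight-$3$ code reachable from $(4,2,0)$ via $\partial_1,\partial_2$ is $(0,1,2)$, not $(2,0,1)$. (A correct dominant ancestor here is $(2,2,0)$, one $\partial_2$ away, but that is not what your construction produces, and for arbitrary $\vecv$ the existence of \emph{some} dominant ancestor supported on $[k]$ is itself a statement that needs proof.) If you want a self-contained argument for (3) from \Cref{def:schub} without RC-graphs, you need a different mechanism---e.g.\ first establish the stability of $Y_\sigma$ under $S_N\hookrightarrow S_{N+1}$, and then argue by descending induction on $n-k$ that the variable $x_n$ drops out whenever $v_n=0$.
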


\paragraph{The transition formula and its applications.} 
Given a code $\vecv\in\N^n$, let 
$k$ be the largest $t$ s.t. $v_t$ is nonzero, and $\vecv'=[v_1, \dots, v_k-1, 0, 
\dots, 0]$. For convenience let $\sigma=\aperm{\vecv'}$.
Then the transition formula of Lascoux and Sch\"utzenberger \cite{LS85} is:
$$Y_\vecv=x_k Y_{\vecv'}+ \sum_{\vecu} Y_{\vecu},$$
where $\vecu\in \N^n$ satisfies that: (i) $\aperm{\vecu}\sigma^{-1}$ is a 
transposition $\tau_{ik}$ for $i<k$; (ii) $|\vecu|=|\vecv|$. Assuming (i), 
condition (ii) is 
equivalent to: 
\begin{multline}\label{eqn:trans}
\text{(a) } \sigma(i) < \sigma(k); 
 \text{(b) for any } j \text{ s.t. } i<j<k, \\
\text{ either } \sigma(j)>\sigma(k), 
\text{ or } \sigma(j)<\sigma(i).
\end{multline}
Let $\Psi_\vecv$ be the set of codes with weight 
$|\vecv|$ appearing in the transition formula for $\vecv$, and 
$\Phi_{\vecv}=\Psi_{\vecv}\cup \{\vecv'\}$. Any $\vecu\in 
\Psi_\vecv$ is uniquely determined by the transposition $\tau_{ik}$, therefore, by 
some $i\in [k-1]$. 

The transition formula yields the following simple, yet rarely 
mentioned\footnote{The only reference we know of is in 
\cite[pp. 62, Footnote 4]{Lascoux_polynomial}. } property of Schubert 
polynomials. 
This is the key to show that the Schubert basis is $L_A$-compatible for some 
appropriate $A$. For completeness we include a proof here. 

Given $\vecv$ and $\vecu$ in $\N^n$, $\vecv$ dominates $\vecu$ reversely, denoted 
as $\vecv\dom\vecu$, if 
$v_n\geq u_n$, $v_n+v_{n-1}\geq 
u_n+u_{n-1}$, \dots, $v_n+\dots +v_2\geq u_n+\dots +u_2$, 
$v_n+\dots+v_1=u_n+\dots+u_1$. 

\begin{lemma}\label{lem:dom}
For $\vecu\in \N^n$, if $\vecx^{\vecu}$ is in $Y_\vecv$, then $\vecv\dom \vecu$. 
Furthermore, $\coeff(\vecv, \schub_\vecv)=1$.
\end{lemma}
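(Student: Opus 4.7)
The plan is to prove both parts of the lemma together by induction on codes using the transition formula immediately preceding the statement. The base case is $\vecv$ dominant; then Definition~\ref{def:schub}(1) gives $Y_\vecv = \vecx^\vecv$, so the unique monomial has exponent $\vecv$ itself, trivially satisfies $\vecv \dom \vecv$, and has coefficient $1$. For the inductive step I need a well-founded ordering on codes in which both the ``main'' descendant $\vecv' = [v_1, \ldots, v_{k-1}, v_k - 1, 0, \ldots, 0]$ and each auxiliary code $\vecu \in \Psi_\vecv$ sit strictly below $\vecv$; lexicographic order on the triple (largest nonzero index of the code, value at that index, remaining entries read in some fixed direction) should work, provided one checks that the transposition producing $\vecu$ in the transition formula decreases this triple.

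Granting this induction setup, apply the transition formula $Y_\vecv = x_k Y_{\vecv'} + \sum_{\vecu \in \Psi_\vecv} Y_\vecu$ and handle the two kinds of terms separately. For $x_k Y_{\vecv'}$: the inductive hypothesis yields $\vecv' \dom \vece$ for every exponent $\vece$ of a monomial in $Y_{\vecv'}$, and reverse dominance is preserved by adding the same standard basis vector $e_k$ to both sides, so from $\vecv = \vecv' + e_k$ we obtain $\vecv \dom \vece + e_k$. The inductive coefficient-one assertion moreover gives $\coeff(\vecv, x_k Y_{\vecv'}) = \coeff(\vecv', Y_{\vecv'}) = 1$. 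For each $Y_\vecu$ with $\vecu \in \Psi_\vecv$, the induction similarly supplies $\vecu \dom \vece$ for each exponent $\vece$ in $Y_\vecu$; so once the combinatorial claim $\vecv \dom \vecu$ with $\vecu \neq \vecv$ is established, transitivity of $\dom$ produces $\vecv \dom \vece$, and the strictness rules out $\vece = \vecv$ (otherwise $\vecu \dom \vecv$ and $\vecv \dom \vecu$ would force $\vecu = \vecv$). Thus no $Y_\vecu$ contributes to the coefficient of $\vecx^\vecv$, and $\coeff(\vecv, \schub_\vecv) = 1$.

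The hard part is the combinatorial step: for every $\vecu \in \Psi_\vecv$, $\vecv \dom \vecu$ strictly. My plan is to fix $\sigma = \aperm{\vecv'}$ and $\aperm{\vecu} = \tau_{ik}\sigma$ with $i < k$ meeting the conditions of \eqref{eqn:trans}, and to compare the suffix sums $\sum_{j \geq \ell} v_j$ and $\sum_{j \geq \ell} u_j$ for each $\ell$ by carefully tracking how the Lehmer code changes when the transposition $\tau_{ik}$ is applied to $\sigma$. The interlacing condition (b) of \eqref{eqn:trans}, which restricts the values $\sigma(j)$ at intermediate positions $i < j < k$ to lie either above $\sigma(k)$ or below $\sigma(i)$, is exactly what one needs to show that these suffix sum differences are nonpositive, with equality forced only at $\ell = 1$ from $|\vecu| = |\vecv|$. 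Once this is verified, the induction closes and both halves of the lemma follow.
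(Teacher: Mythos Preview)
Your proposal is correct and follows essentially the same approach as the paper: induct via the transition formula, handle $x_kY_{\vecv'}$ and each $Y_\vecu$ separately, and reduce to the combinatorial claim $\vecv\dom\vecu$ for $\vecu\in\Psi_\vecv$. Your induction ordering (lex on $(k,v_k,\dots)$) differs from the paper's (weight, then reverse dominance), but both are well-founded and both work; your treatment of the $x_kY_{\vecv'}$ term, adding $e_k$ to both sides of $\vecv'\dom\vece$, is in fact slightly cleaner than the paper's, which detours through \Cref{fact:schub}~(3).

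The one place worth a remark is what you call ``the hard part.'' The paper dispatches it in a line by observing that condition~(b) of~\eqref{eqn:trans} forces the Lehmer codes of $\sigma$ and $\sigma\tau_{ik}$ to agree at every position $j\notin\{i,k\}$ (for $i<j<k$, the value $\sigma(k)$ at position $k$ is replaced by $\sigma(i)$, but (b) guarantees $\sigma(j)$ lies on the same side of both). Hence $\vecu$ and $\vecv'$ differ only at positions $i$ and $k$; then $\sigma(i)<\sigma(k)$ gives $u_k\le v'_k=v_k-1<v_k$, and $|\vecu|=|\vecv'|+1$ forces $u_i>v'_i=v_i$. Reverse dominance $\vecv\dom\vecu$ (strictly) is then immediate from comparing suffix sums, with no case-by-case tracking needed. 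This same observation also instantly verifies that $\vecu$ decreases in your lex ordering, closing the one point you flagged as needing a check.
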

\begin{proof} 
We first induct on the weight. When the weight is $1$, the claim holds 
trivially. Assume the claim holds for weight $\leq w$, and consider 
$\vecv\in\N^n$ with $|\vecv|=w+1$. We now induct on the reverse dominance order, 
from small to large. The smallest one with weight $w+1$ is $[w+1, 0, \dots, 0]$. 
As $Y_{[w+1, 0, \dots, 0]}=x_1^{w+1}$, the claim holds.

For the induction step, we make use of the transition formula. Suppose
$k$ is the largest $i$ s.t. $v_i$ is nonzero, $\vecv'=[v_1, \dots, v_k-1, 0, 
\dots, 0]$, and $\sigma=\aperm{\vecv'}$.
Then by the transition formula, $Y_\vecv=x_k Y_{\vecv'}+ \sum_{\vecu} Y_{\vecu}$, 
where $\vecu\in \N^n$ 
satisfies that: (i) $\aperm{\vecu}\sigma^{-1}$ is a 
transposition $\tau_{ik}$ for $i<k$; (ii) $|\vecu|=|\vecv|$. Assuming (i), the 
condition (ii) is 
equivalent to that: (a) $\sigma(i) < \sigma(k)$; (b) for any $j$ s.t. $i<j<k$, 
either $\sigma(j)>\sigma(k)$, or $\sigma(j)<\sigma(i)$. Thus $\vecu$ and $\vecv'$ 
can only differ at positions $i$ and $k$, and $u_i > v'_i=v_i$, $u_k\leq 
v'_k<v_k$. It follows 
that $\vecv\dom\vecu$, and it is clear that $\vecv\neq \vecu$. By the induction 
hypothesis on $|\vecv|$, each monomial 
in $Y_{\vecv'}$ is reverse dominated by $\vecv'$, and $\coeff(\vecv', 
\schub_{\vecv'})=1$. As $Y_{\vecv'}$ depends only 
on $x_1, \dots, x_k$ by \Cref{fact:schub} (3), each monomial in 
$x_kY_{\vecv'}$ is reverse dominated by $\vecv$. By the induction 
hypothesis on the reverse dominance order, every monomial in $Y_\vecu$ is  
reverse dominated by $\vecu$, thus is reverse dominated by $\vecv$ and cannot be 
equal to $\vecv$. Thus 
$\coeff(\vecv, \schub_\vecv)=1$, which is from $x_k\schub_{\vecv'}$. This finishes 
the induction step.
\end{proof}

We then deduce another property of Schubert polynomials from the transition 
formula. Starting with $\vecv_0$, we can form a chain of transitions 
$
\vecv_0 \to \vecv_1 \to \vecv_2 \to \dots \to \vecv_i \to \dots
$
where $\vecv_{i}\in \Psi_{\vecv_{i-1}}$. The following lemma shows that 
long enough transitions lead to anti-dominant codes.
\begin{lemma}[{\cite[Lemma 3.11]{LS85}}]\label{lem:long_trans}
Let $\vecv_0\to \vecv_1\to \dots \to \vecv_\ell$ be a sequence of codes in $\N^n$, 
s.t. $\vecv_{i}\in \Psi_{\vecv_{i-1}}$, $i\in[\ell]$. If none of $\vecv_i$'s are 
anti-dominant, then $\ell \leq n\cdot |\vecv_0|$. 
\end{lemma}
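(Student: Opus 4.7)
The plan is to exhibit an integer-valued monovariant on codes that strictly decreases by at least one at every step of such a chain, and to bound its range. The key structural input is already in the proof of \Cref{lem:dom}: if $\vecu\in \Psi_\vecv$ is obtained via the transposition $\tau_{ik}$ with $i<k$ and $k$ the last nonzero index of $\vecv$, then $\vecu$ differs from $\vecv'$ only at positions $i$ and $k$, with $u_i>v_i$ and $u_k\leq v_k-1$. Since $\vecv'$ differs from $\vecv$ only at position $k$ by definition, the same holds between $\vecu$ and $\vecv$. As $|\vecu|=|\vecv|$, writing $\Delta:=v_k-u_k\geq 1$ forces $u_i-v_i=\Delta$ as well. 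First I would simply quote this description from the proof of \Cref{lem:dom}.

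I would then introduce the monovariant $\phi(\vecw):=\sum_{j=1}^n j\cdot w_j$. Under a transition $\vecv\to\vecu$ as above,
\begin{equation*}
\phi(\vecu)-\phi(\vecv)=i(u_i-v_i)+k(u_k-v_k)=(i-k)\Delta\leq -1,
\end{equation*}
so $\phi$ strictly decreases by at least one at every step. Since transitions preserve weight, $|\vecv_i|=|\vecv_0|$ for all $i$, which gives $|\vecv_0|\leq \phi(\vecv_i)\leq n\cdot|\vecv_0|$. Summing the decreases, $\ell\leq \phi(\vecv_0)-\phi(\vecv_\ell)\leq (n-1)\cdot|\vecv_0|\leq n\cdot|\vecv_0|$, as claimed.

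The main difficulty is locating the right monovariant: a naive choice such as the reverse-dominance order of \Cref{lem:dom} gives only a strict partial order and no quantitative bound. The weighted sum $\sum_j j\cdot w_j$ is natural once one observes that every transition moves mass from the rightmost nonzero coordinate $k$ to some coordinate $i<k$ strictly to its left, so the ``center of mass'' shifts leftwards by at least one at each step. I note also that the hypothesis that no $\vecv_i$ is anti-dominant plays no role in this counting; it only reflects the context in which the lemma is invoked, namely the Lascoux--Sch\"utzenberger recursive procedure, which terminates at anti-dominant codes as Schur-polynomial base cases via \Cref{fact:schub}(1).
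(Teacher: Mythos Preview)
Your argument is correct. The paper does not supply its own proof of this lemma; it is quoted directly from \cite[Lemma 3.11]{LS85}, so there is no in-paper proof to compare against. Your monovariant $\phi(\vecw)=\sum_{j} j\,w_j$ is exactly the right device: the structural fact you extract from the proof of \Cref{lem:dom}---that a $\Psi$-step moves $\Delta\geq 1$ units of weight from the last nonzero position $k$ to some strictly earlier position $i$---immediately gives $\phi(\vecu)-\phi(\vecv)=(i-k)\Delta\leq -1$, and the trivial bounds $|\vecv_0|\leq \phi(\cdot)\leq n|\vecv_0|$ on weight-$|\vecv_0|$ codes finish the count. Your remark that the anti-dominant hypothesis is not used in the bound itself is also correct; it only marks where the recursion halts in the intended application.

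One small presentational point: when you invoke the description from the proof of \Cref{lem:dom}, note that what is stated there is that $\vecu$ and $\vecv'$ differ only at positions $i$ and $k$; you then need the (obvious) extra sentence that $\vecv'$ and $\vecv$ differ only at position $k$, so the same conclusion holds for $\vecu$ versus $\vecv$. You do say this, but make sure the reader sees that the equality $u_i-v_i=v_k-u_k$ comes from combining ``only positions $i,k$ differ'' with $|\vecu|=|\vecv|$, since the proof of \Cref{lem:dom} by itself only gives the two inequalities separately.
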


Based on \Cref{lem:long_trans}, we have the following corollary.
Recall that for $\vecv\in\N^n$, $\Phi_\vecv$ is the 
collection of codes (not necessarily of weight $|\vecv|$) in the transition 
formula for $\vecv$. 
\begin{corollary}\label{cor:long_trans}
Let $\vecv_0\to \vecv_1\to \dots \to \vecv_\ell$ be a sequence of codes in $\N^n$, 
s.t. $\vecv_{i}\in \Phi_{\vecv_{i-1}}$, $i\in[\ell]$. If none of $\vecv_i$'s are 
anti-dominant, then $\ell \leq n\cdot (|\vecv_0|^2+|\vecv_0|)$. 
\end{corollary}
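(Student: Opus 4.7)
The plan is to decompose the chain $\vecv_0 \to \vecv_1 \to \dots \to \vecv_\ell$ according to the two kinds of moves permitted by $\Phi$. Recall that $\Phi_{\vecw} = \Psi_{\vecw} \cup \{\vecw'\}$, so at each step either (i) $\vecv_i \in \Psi_{\vecv_{i-1}}$, which is a ``$\Psi$-move'' and preserves weight by the definition of $\Psi$, or (ii) $\vecv_i = \vecv_{i-1}'$, which is a ``weight-reducing move'' and decreases weight by exactly one. Let $r$ be the number of weight-reducing moves in the chain. Since the weight is a nonnegative integer starting at $|\vecv_0|$, we automatically have $r \leq |\vecv_0|$.

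Next, I would look at the $r+1$ maximal consecutive subchains of $\Psi$-moves produced by deleting the weight-reducing moves. Within the $j$-th such segment (counting $j = 0, 1, \dots, r$), every code has weight exactly $|\vecv_0| - j$. The hypothesis that no $\vecv_i$ is anti-dominant applies to every vertex of the original chain and hence to every vertex of every segment, so Lemma~\ref{lem:long_trans} can be invoked on the $j$-th segment, bounding its length by $n(|\vecv_0| - j)$.

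Summing the segment bounds and adding the $r$ weight-reducing moves gives
\[
\ell \;\leq\; r \;+\; \sum_{j=0}^{r} n(|\vecv_0| - j) \;\leq\; |\vecv_0| \;+\; \frac{n\,|\vecv_0|(|\vecv_0|+1)}{2} \;\leq\; n\bigl(|\vecv_0|^2 + |\vecv_0|\bigr),
\]
where the last inequality absorbs the additive $|\vecv_0|$ into the main term using $n \geq 1$. The only point that requires verification is that Lemma~\ref{lem:long_trans} truly applies segment by segment: this reduces to the trivial observation that a weight-reducing move connects codes of different weights and therefore cannot occur inside a maximal $\Psi$-segment, so each segment is a genuine $\Psi$-chain of non-anti-dominant codes. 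There is no substantial obstacle beyond this bookkeeping; the crucial input is Lemma~\ref{lem:long_trans}, and everything else is a counting argument exploiting the fact that the weight can drop at most $|\vecv_0|$ times.
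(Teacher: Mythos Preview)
Your proposal is correct and follows essentially the same approach as the paper: both arguments partition the chain according to weight, apply Lemma~\ref{lem:long_trans} to each constant-weight $\Psi$-segment, and sum the resulting bounds together with the at most $|\vecv_0|$ weight-reducing steps. Your bookkeeping is slightly more explicit (and in fact yields the sharper intermediate bound $|\vecv_0| + \tfrac{n|\vecv_0|(|\vecv_0|+1)}{2}$), but the idea is identical.
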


\begin{proof} 
For $w\in[|\vecv_0|]$, let $i_w$ be the last index $i$ in $[\ell]$ s.t. $\vecv_i$ 
is of weight $w$. As none of $\vecv_i$'s are anti-dominant, by 
\Cref{lem:long_trans}, $i_{w-1}-i_w\leq  n\cdot w+1 \leq n\cdot |\vecv_0|+1$. 
The result then follows. 
\end{proof}
In fact, by following the proof of \Cref{lem:long_trans} as in \cite{LS85}, 
it is not hard to show that in \Cref{cor:long_trans} the same bound as in \Cref{lem:long_trans}, namely $\ell \leq n \cdot |\vecv_0|$, holds. 

From \Cref{cor:long_trans}, a $\CountP$ algorithm for Schubert polynomials can also be derived. 

\begin{proof}[An alternative proof of \Cref{cor:eval}]
Consider the following Turing 
machine $M$: the input to $M$ is a code $\vecv\in\N^n$ in unary, a point 
$\veca=(a_1, \dots, a_n)\in\N^n$ in binary, and a sequence $\vecs$ of pairs of 
indices $(s_i, t_i)\in 
[n]\times [n]$, $s_i\leq t_i$, and $i \in [\ell]$ where $\ell:=n\cdot 
(|\vecv|^2+|\vecv|)$. 
The output of $M$ is a 
nonnegative integer. Given the input $(\vecv, \veca, \vecs)$, $M$ computes a 
sequence 
of codes $\vecv_0\to\vecv_1\to\dots\to\vecv_\ell$, and keeps track of a 
monomial $\vecx^{\vece_0}\to\vecx^{\vece_1}\to\dots\to\vecx^{\vece_\ell}$, where 
$\vece_i\in\N^n$. The pair 
of indices $(s_{i+1}, t_{i+1})$ is used as the instruction to 
obtain $\vecv_{i+1}$ from $\vecv_i$, and $\vece_{i+1}$ from $\vece_i$. 

To start, $\vecv_0=\vecv$, and $\vece=(0, \dots, 0)$. Suppose at step $i$, 
$\vece_i=(e_1, \dots, e_n)$, and $\vecv_i=(v_1, \dots, v_k, 0, \dots, 0)$, 
$v_k\neq 0$. ($k$ is the maximal nonzero index in $\vecv_i$.)

If $\vecv_i$ is anti-dominant, then $Y_{\vecv_i}$ equals to some Schur polynomial 
by \Cref{fact:schub} (1). Using \Cref{prop:compute_schur} $M$ can 
compute the evaluation of that Schur polynomial on $\veca$ efficiently,
and then multiply with the value $\prod_{i\in[n]}a_i^{e_i}$ as the output. Note 
that as will be seen below, the weight of $\vece_i$ is $\ell\leq n\cdot 
|\vecv_0|^2$, so the bit length of $\prod_{i\in[n]}a_i^{e_i}$ is polynomial in the input size. 

In the following $\vecv_i$ is not anti-dominant. $M$ checks whether $t_{i+1}= k$. 
If not, $M$ outputs $0$. 

In the following $t_{i+1}=k$. $M$ then checks whether $s_{i+1}=t_{i+1}$. 

If $s_{i+1}=t_{i+1}$, $M$ goes to step $i+1$ by setting $\vecv_{i+1}=(v_1, \dots, v_k-1, 0, \dots, 0)$, 
and $\vece_{i+1}=(e_1, \dots, e_{k-1}, e_k+1, e_{k+1}, \dots, e_n)$.

If $s_{i+1}<t_{i+1}$, then $M$ tests whether $s_{i+1}$ is an index in $\Psi_{\vecv_i}$, as follows. 
It first computes the permutation $\sigma:=\langle (v_1, \dots, v_{k-1}, v_k-1, 0, \dots, 0)\rangle\in S_N$, 
using the procedure described in \Cref{sec:prel}. 
Note that $N= \max_{i\in[n]}\{v_i+i\}\leq |\vecv_i|+n\leq |\vecv_0|+n$. 
Then it tests whether $s_{i+1}$ is in $\Psi_{\vecv_i}$, using \Cref{eqn:trans}. 
If $s_{i+1}$ is not in $\Psi_{\vecv_i}$, then $M$ outputs $0$. 
Otherwise, $M$ goes to step $i+1$ by setting $\vecv_{i+1}=\acode{\sigma\tau_{s_{i+1}, t_{i+1}}}$, and $\vece_{i+1}=\vece_i$.

This finishes the description of $M$. 
Clearly $M$ runs in time polynomial in the input size. 
$M$ terminates within $\ell$ steps by \Cref{cor:long_trans}. 
$M$ always outputs a nonnegative integer as Schur polynomials are polynomials with positive coefficients.

Finally note that $\schub_\vecv(\veca)$ is equal to $\sum_\vecs M(\vecv, \veca, \vecs)$, where $\vecs$ runs over all the sequences of pairs of 
indices as described at the beginning of the proof. 
By the discussion on $\CountP$ in \Cref{sec:prel}, this puts evaluating 
$\schub_\vecv$ on $\veca$ in $\CountP$.
\end{proof}

\paragraph{The Schubert basis is interpolation-friendly.} Now we are in the 
position to prove that the Schubert basis is 
interpolation-friendly.

\begin{proposition}\label{prop:schub_bound}
$\{Y_{\vecv}\in\Z[x_1, \dots, x_n] \mid \vecv\in\N^n\}$ is $K$-bounded for 
$K(\vecv, n)=n^{2n\cdot (|\vecv|^2+|\vecv|)}\cdot \sqrt{|\vecv|!}$.
\end{proposition}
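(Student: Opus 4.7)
The plan is to expand $Y_\vecv$ recursively via the transition formula of Lascoux and Sch\"utzenberger until only Schur polynomials appear, and then bound each resulting contribution using \Cref{prop:schur_bound}. Since every coefficient appearing in the transition formula is $1$, and Schur polynomials have nonnegative coefficients, the absolute value of the coefficient of any given monomial in $Y_\vecv$ is at most the number of ``leaves'' in this expansion tree times the maximum Schur coefficient encountered.

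More precisely, I would construct a rooted tree whose root is labeled by $\vecv$. At each internal node labeled by a non-anti-dominant code $\vecu$, the children are the codes in $\Phi_{\vecu}$ (the child labeled $\vecu'$ carries an extra monomial weight $x_k$ where $k$ is the last nonzero index of $\vecu$). A node is declared a leaf as soon as its label is anti-dominant, in which case $Y_{\vecu_{\text{leaf}}}$ equals a Schur polynomial $\schur_\alpha$ with $|\alpha|\leq |\vecv|$ by \Cref{fact:schub}(1). The key bounds are: the branching factor at every node is at most $n$, since $|\Phi_\vecu|\leq k\leq n$; and the depth of the tree is at most $n\cdot(|\vecv|^2+|\vecv|)+1$ by \Cref{cor:long_trans} (any root-to-leaf path is a valid chain in $\Phi$-relations whose non-terminal portion consists of non-anti-dominant codes). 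Hence the number of leaves is at most $n^{n(|\vecv|^2+|\vecv|)+1}$.

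Each leaf contributes a polynomial of the form $\vecx^{\vece}\cdot \schur_\alpha(x_1,\dots,x_k)$, where the monomial $\vecx^\vece$ is the product of the accumulated $x_k$-factors along the root-to-leaf path. By \Cref{prop:schur_bound}, every monomial coefficient in $\schur_\alpha$ is bounded in absolute value by $\sqrt{|\alpha|!}\leq \sqrt{|\vecv|!}$; multiplying by the fixed monomial $\vecx^\vece$ does not change the size of any coefficient. Summing the contributions to any given monomial $\vecx^\veca$ over all leaves (all of which are nonnegative, since transition-formula and Schur coefficients are nonnegative) yields
\begin{equation*}
|\coeff(\veca, Y_\vecv)|\ \leq\ n^{n(|\vecv|^2+|\vecv|)+1}\cdot \sqrt{|\vecv|!}\ \leq\ n^{2n(|\vecv|^2+|\vecv|)}\cdot \sqrt{|\vecv|!},
\end{equation*}
with the trivial cases $n=1$ or $|\vecv|=0$ handled separately (where $Y_\vecv$ is a single monomial with coefficient $1$).

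The main obstacle is justifying that the tree really does terminate and admits the depth bound claimed. This reduces to \Cref{cor:long_trans}, which I would invoke directly; the slight subtlety that the statement there is phrased for chains rather than tree paths is immaterial since any root-to-leaf path of the tree yields such a chain (with the penultimate code non-anti-dominant, and one last step to an anti-dominant leaf). Everything else is a routine counting argument, using only that all contributions have the same sign.
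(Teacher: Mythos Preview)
Your proposal is correct and follows essentially the same approach as the paper: recursively expand $Y_\vecv$ via the transition formula until anti-dominant codes (Schur polynomials) are reached, bound the number of leaves using \Cref{cor:long_trans}, and bound each leaf's coefficients via \Cref{prop:schur_bound}. Your bookkeeping is in fact slightly sharper---you use branching factor $|\Phi_\vecu|\leq n$ to get $n^{n(|\vecv|^2+|\vecv|)+1}$ leaves, whereas the paper counts $n^2$ choices per step (coming from the pair $(s_i,t_i)\in[n]\times[n]$ in the alternative proof of \Cref{cor:eval}) to arrive directly at $n^{2n(|\vecv|^2+|\vecv|)}$.
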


\begin{proof} 
The alternative proof of \Cref{cor:eval} for Schubert polynomials implies that $\schub_\vecv$ 
can be written as a sum of at most $(n^2)^{n\cdot (|\vecv|^2+|\vecv|)}$ 
polynomials $f$, 
where $f$ is of the form $\vecx^\vece\cdot \schur_\alpha$, $|\alpha|+|\vece|=|\vecv|$. 
The coefficients in Schur polynomial of degree $d$ are bounded by $\sqrt{d!}$ by \Cref{prop:schur_bound}. 
The claim then follows.
\end{proof}

\begin{proposition}\label{prop:schub_friendly}
$\{Y_{\vecv}\in\Z[x_1, \dots, x_n] \mid \vecv\in\N^n\}$ is $L_A$-compatible for 
$A=(r_{i,j})_{i, j\in[n]}$, $r_{i,j}=0$ if $i < j$, and $1$ otherwise. The 
leading monomial of $Y_{\vecv}$ w.r.t. $L_A$ is $\vecx^\vecv$.
\end{proposition}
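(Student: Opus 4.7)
The plan is to directly exploit \Cref{lem:dom} together with the specific structure of $A$. Unpacking the definition: with $r_{i,j}=0$ for $i<j$ and $r_{i,j}=1$ for $i\geq j$, a vector $\vecc = A\vecc'$ with $\vecc'=(c'_1,\dots,c'_n)\in(\Z^+)^n$ satisfies $c_i = c'_1+c'_2+\dots+c'_i$; in particular the entries of $\vecc'$ are strictly positive. So $L_A$ consists exactly of those linear forms $\vecc^\star$ whose defining vector has strictly positive ``consecutive differences.'' For any $\vecu\in\N^n$, I would rearrange the inner product using partial sums from the right:
\begin{equation*}
\langle \vecc,\vecu\rangle \;=\; \sum_{i=1}^n c_i u_i \;=\; \sum_{i=1}^n\Big(\sum_{j\leq i}c'_j\Big)u_i \;=\; \sum_{j=1}^n c'_j\,U_j,\qquad\text{where }U_j:=u_j+u_{j+1}+\dots+u_n.
\end{equation*}

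Next I would invoke \Cref{lem:dom}: for any $\vecu\in E_{Y_\vecv}$ we have $\vecv\dom\vecu$, i.e.\ $V_j\geq U_j$ for $j\geq 2$ and $V_1=U_1$, where $V_j:=v_j+\dots+v_n$. Writing $\langle \vecc,\vecv\rangle$ in the same partial-sum form gives
\begin{equation*}
\langle \vecc,\vecv\rangle - \langle \vecc,\vecu\rangle \;=\; \sum_{j=1}^n c'_j(V_j-U_j) \;=\; \sum_{j=2}^n c'_j(V_j-U_j) \;\geq\; 0,
\end{equation*}
since $c'_j>0$ and $V_j\geq U_j$ for $j\geq 2$. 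Moreover equality forces $V_j=U_j$ for all $j\geq 2$ (together with $V_1=U_1$), and a telescoping argument via $v_j=V_j-V_{j+1}$ then gives $\vecv=\vecu$. Hence $\vecc^\star$ attains its maximum on $E_{Y_\vecv}$ uniquely at $\vece_\vecv:=\vecv$, verifying condition (1) of $L_A$-compatibility.

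For conditions (2) and (3), note that the assignment $\vecv\mapsto\vece_\vecv=\vecv$ is trivially injective on $\Lambda=\N^n$, and \Cref{lem:dom} also states $\coeff(\vecv,\schub_\vecv)=1$, which gives condition (3). There is no real obstacle here; the entire argument hinges on two matches: the matrix $A$ is chosen precisely so that $L_A$ consists of the linear forms that strictly respect the reverse dominance order (via positive partial-difference weights), and \Cref{lem:dom} supplies exactly this reverse dominance for monomials occurring in $Y_\vecv$. The only moderately subtle point worth flagging in the writeup is that the zeroth coordinate of the comparison vanishes because $Y_\vecv$ is homogeneous of degree $|\vecv|$, which is what turns $V_1=U_1$ into a harmless constraint rather than a missing inequality.
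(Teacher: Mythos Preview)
Your argument is correct and follows essentially the same route as the paper: both proofs rewrite $\langle A\vecc',\vecu\rangle$ as $\sum_j c'_j(u_j+\dots+u_n)$, invoke \Cref{lem:dom} to compare these tail sums with those of $\vecv$, and use strict positivity of the $c'_j$ to force equality only at $\vecu=\vecv$. Your writeup is a bit more explicit in checking conditions (2) and (3) and in noting that homogeneity makes the $j=1$ term vanish, but the mathematical content is identical.
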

The matrix $A$ in \Cref{prop:schub_friendly} is the lower triangular matrix 
of $1$'s on the diagonal and below. Compare with that for Schur polynomials, 
described in \Cref{subsec:good}. 

\begin{proof} 
This follows easily from \Cref{lem:dom}: note that for any 
$\vecc=(c_1, \dots, c_n)\in(\Z^+)^n$, $\vecu\in\schub_{\vecv}$, $\langle A\vecc, 
\vecu\rangle=c_1(u_1+\dots + u_n)+c_2(u_2+\dots+u_n)+\dots+c_n u_n\leq 
c_1(v_1+\dots+v_n)+c_2(v_2+\dots+v_n)+\dots+c_nv_n=\langle A\vecc, \vecv\rangle$. 
As $c_i>0$, the equality holds if and only if $\vecu=\vecv$.
\end{proof}

Now we conclude the article by proving the main \Cref{thm:interpolate}. 
\begin{proof}[Proof of \Cref{thm:interpolate}]
Note that $n^{2n\cdot(|\vecv|^2+|\vecv|)}\cdot \sqrt{|\vecv|!}$ is upper bounded 
by 
$2^{O(n\log n\cdot 
(|\vecv|^2+|\vecv|)+|\vecv|\log(|\vecv|))}$, and recall $|\vecv|=\deg(Y_\vecv)$. 
Then 
combine \Cref{prop:schub_bound}, \Cref{prop:schub_friendly}, and 
\Cref{thm:interpolate_good}.
\end{proof}



\paragraph{Acknowledgement.}
  Part of the work was done when 
Youming was 
visiting 
the Simons Institute for the program Algorithms and Complexity in Algebraic 
Geometry. We are grateful to Allen Knutson for his answer 
at \url{http://mathoverflow.net/q/186603}. We would like to thank the anonymous 
reviewers whose suggestions help to improve the writing of this paper greatly.   
Youming's research was supported by 
Australian Research Council DE150100720. 
Priyanka's research was supported by core grants for Centre for Quantum Technologies, NUS.


\bibliographystyle{alpha}
\bibliography{ref}

\appendix

\section{An alternative proof of skew Schubert 
polynomials in $\VNP$.}

By Equation \Cref{eq:skew}, skew Schubert
polynomial can be written as:
\begin{multline}
  \schub_{\pi/\sigma}(\vecx) = \schub_{\pi/\sigma}(x_1, x_2, \ldots x_N) \\ :=
  \sum_{C} \vecx^{\vecd}/\vecx^{\vece(C)} 
  = \sum_{C} \prod_{i=1}^{N-1} x_i^{N - i - e_i} 
 \label{eq:skew2}
\end{multline}
Note that $e_N = 0$.

Let $m$ be the length of an increasing chain. 
The first and second indices of each label in an increasing chain are encoded by 
$N \times m$ $0-1$ matrices $g$ and $b$
respectively.
In these matrices the variables are listed along the row and the edges of a chain 
along the column.
So for each column, there will be a 1 in the row which corresponds to the index that appears in that label.
Thus in case of the $g$ matrix it indicates which variable should be multiplied in the monomial.

All permutations are represented by $N \times N$ $0-1$ matrices, acting on  
length-$N$ column vectors.
$W_0$ and $V$, representing permutations $\sigma$ and $\pi$ respectively are given.
Let $W_1, \ldots W_m$ be the intermediate permutations in the increasing chain.

Now consider the following polynomials.

\begin{eqnarray}
 h_{1N} = \prod_{i=1}^{N} x_i^{(N-i)} \prod_{j=1}^m (\sum_{k=1}^{N} x_k^{-1} 
 g_{kj} )
 \label{poly:h_1n}
\end{eqnarray}
$h_{1N}$ encodes the monomial for a given increasing chain, which depends on $g$. 

In the following we shall gradually build up a series of polynomials, which 
basically characterize the property that $g$ and $b$ form an increasing chain. 

\begin{eqnarray}
 h_{2N} = \prod_{t=1}^m [ ( \prod_{i,j,l,k} ( 1 - (W_t)_{ij} (W_t)_{lk} ) 
 )\cdot(\prod_{i=1}^N \sum_{j=1}^N (W_t)_{ij} ) ]
 \label{poly:h_2n}
\end{eqnarray}
where the second product is over all $1 \leq i,j,l,k \leq N$ such that $i = l$ iff 
$j \neq k$. 
$h_{2N}$ encodes valid permutation matrices. 
That is, it is non-zero iff $W_1, \ldots W_m$ are valid permutation matrices, 
that is each row and column contains exactly one 1.

\begin{eqnarray}
 h_{3N} = [\prod_{i,j,k} (1 - g_{ik} g_{kj} ) ]\cdot[\prod_{i,j,k} (1 - b_{ik} 
 b_{kj} ) ]
 \label{poly:h_3n}
\end{eqnarray}
where both the products are over all $ 1 \leq k \leq m$ and $1 \leq i,j \leq N$ 
such that $i \neq j$. 
$h_{3N}$ is non-zero iff each column of $g$ and $b$ has at most one 1.

\begin{eqnarray}
 h_{4N} = [\prod_{j=1}^m \sum_{k=1}^N g_{kj} ]\cdot[\prod_{j=1}^m \sum_{k=1}^N 
 b_{kj} ]
 \label{poly:h_4n}
\end{eqnarray}
$h_{4N}$ is non-zero iff there is at least one 1 in each column of $g$ and $b$.

\begin{eqnarray}
 h_{5N} = \prod_{i,j=1}^N [1 - ( (W_m)_{ij} -V_{ij} ) ] \cdot [1 + ( (W_m)_{ij} 
 -V_{ij} ) ]
 \label{poly:h_5n2}
\end{eqnarray}
$h_{5N}$ is non-zero iff $W_m = V$.

\begin{multline}
 h_{6ijt} 
 = \prod_{a,b = 1}^N [1 - ((W_t)_{ab} - (W_{t-1} \tau_{ij})_{ab} ) ].
 [1 + ((W_t)_{ab} - (W_{t-1} \tau_{ij})_{ab} ) ]\cdot b_{jt}
 \label{poly:h_6ijt}
\end{multline}
$h_{6ijt}$ is non-zero iff for a particular transposition $(i,j)$ and for a pair of consecutive permutations 
$W_{t-1}$ and $W_t$, (a) $W_t = W_{t-1} \tau_{ij}$ and (b) second index of the label is given according to definition of 
labeled Bruhat order.

\begin{eqnarray}
 h_{7t} = \sum_{i,j=1}^N \sum_{\substack{k,l=1 \\ k < l}}^N (W_{t-1})_{ik} \cdot 
 (W_{t-1})_{jl} \cdot 
	  \prod_{\substack{i<a<j \\ k<b<l}} [1 - (W_{t-1})_{ab} ] \cdot h_{6ijt}
 \label{poly:h_7t}
\end{eqnarray}
$h_{7t}$ is non-zero iff for any pair of consecutive permutations $\sigma' , \pi'$ being encoded by 
$W_{t-1}$ and $W_t$ respectively such that $\pi' = \tau_{ij} \sigma'$ ,
we have $\sigma'(i) < \sigma'(j)$ ($i < j$) and for every $i < k < j$, either $\sigma'(k) < \sigma'(i)$ or 
$\sigma'(k) > \sigma'(j)$.
That is, $\sigma' < \pi'$ in the Bruhat order.

\begin{eqnarray}
 h_{8N} = \prod_{t=1}^m h_{7t} \sum_{i \leq s < j} g_{st}
 \label{poly:h_8n}
\end{eqnarray}
where $1 \leq s < N $. 
$h_{8N}$ is non-zero iff the sequence of permutations is correct, namely, 
maintaining the Bruhat order so that the labels are given accordingly.

\begin{eqnarray}
 h_{9N} = \prod_{t=1}^{m-1} \sum_{i=1}^N g_{it} [\sum_{j=i+1}^N g_{j,t+1} + 
 g_{i,t+1} (\sum_{k=1}^N b_{kt} \cdot
  \sum_{l=k}^N b_{l,t+1} )   ]
  \label{poly:h_9n}
\end{eqnarray}
$h_{9N}$ is non-zero iff each pair of labels in a chain respects the increasing lexicographic order.

We define the polynomial :
\begin{multline}
 h_N(x_1, \ldots x_N, g, b, W_1, \ldots W_m) \\ = h_{1N}\cdot h_{2N}\cdot 
 h_{3N}\cdot 
 h_{4N}\cdot h_{5N}\cdot h_{8N}\cdot h_{9N}
 \label{poly:vp}
\end{multline}
For each assignment to $g, b, W_i$, $h_N$ is either $0$, or a monomial 
corresponding to one correct increasing chain.
It is clear that the size of a straight line program to evaluate $h_N$ is polynomial in $N$.
So $h_N$ is in $\VP$.

The skew Schubert polynomial, then can be given by
\begin{eqnarray}
 \schub_{\pi / \sigma}(x_1, \ldots x_N) = \sum_{g, b, W_1, \ldots W_m} h_N(x_1, 
 \ldots x_N, g, b, W_1, \ldots W_m)
 \label{poly:schub}
\end{eqnarray}
$g$ and $b$ have $mN$ elements and since $m$ is $\mathcal{O}(N^2)$ so each has $\mathcal{O}(N^3)$ elements.
Each $W_i$ has $N^2$ entries.
Thus the summation is over 0-1 strings of polynomial length.

This proves $\schub_{\pi/\sigma}(\vecx)$ is p-definable and hence is in $\VNP$.

\end{document}